\documentclass[english,11pt]{article}

\usepackage{comment}
\usepackage[english]{babel}
\usepackage{inputenc}
\usepackage[T1]{fontenc}
\usepackage{geometry}
\usepackage{xcolor}
\usepackage{enumerate}
\usepackage{amsmath,amsthm,amssymb,amsfonts,bbm,bm,nicefrac,mathtools}
\usepackage{hyperref}
\usepackage[normalem]{ulem}

\newcommand{\R}{\mathbb{R}}

\newcommand{\N}{\mathbb{N}}

\newcommand{\dH}{\Delta H}

\newcommand{\tsigma}{\tilde{\sigma}}

\newcommand{\E}{\mathbb{E}}

\newcommand{\proba}{\mathbb{P}}

\newcommand{\thermal}[1]{\left\langle#1\right\rangle}
\newcommand{\Var}[1]{\mbox{Var}\left(#1\right)}

\newcommand{\tJ}{\tilde{J}}

\newcommand{\tilh}{\tilde{h}}

\newtheorem{theorem}{Theorem}
\theoremstyle{plain}
\newtheorem{proposition}{Proposition}
\newtheorem{lemma}{Lemma}
\newtheorem{corollary}{Corollary}
\theoremstyle{plain}

\theoremstyle{plain}

\theoremstyle{plain}
\newtheorem{definition}{Definition}
\theoremstyle{remark}
\newtheorem*{remark}{Remark}
\theoremstyle{definition}

\definecolor{ps}{RGB}{0,0,200}

\title{A contiguity approach to replica symmetric marginals}

\author{Ernesto Mordecki, Anas Rahman, and Manuel S\'aenz}

\begin{document}

\maketitle

\begin{abstract}

    We develop a probabilistic cavity-contiguity framework for proving replica-symmetric convergence of local marginals in mean-field Gibbs systems. The approach is based on cavity decompositions of the Hamiltonian together with direct comparison of probability measures through Radon-Nikodym derivatives and Hellinger-type estimates. At a conceptual level, the method separates the concentration of the relevant order parameters from the identification of the asymptotic cavity model and the comparison of the associated Gibbs measures. In contrast with interpolation-based approaches, the argument relies only weakly on the Gaussianity of the disorder and naturally accommodates concentration tools such as Poincaré and log-Sobolev inequalities. Rather than pursuing maximal generality, with the aim of making the method transparent, we implement the framework in a canonical example: the high-temperature Sherrington-Kirkpatrick model. In this setting, we prove that the marginal law of a fixed spin converges in total variation toward the effective one-dimensional cavity measure predicted by the replica method. Beyond the specific result for the Sherrington-Kirkpatrick model, the paper illustrates a broader cavity-contiguity methodology which is expected to extend naturally to other mean-field Gibbs systems, particularly Bayesian inference models with or without mismatch.

\end{abstract}


\section{Introduction}\label{sec:intro}

A recurring phenomenon in high-dimensional mean-field systems is the emergence of effective low-dimensional descriptions for local observables in the thermodynamic limit. In spin glasses, Bayesian inference, and high-dimensional statistics, methods from statistical physics predict that finite collections of coordinates asymptotically decouple and behave according to effective scalar Gibbs measures driven by random cavity fields \cite{mezard2009information,mezard1987spin}. The approach described here draws heavily from the heuristic cavity method of these references. These asymptotic scalar descriptions play a central role in the modern theory of spin glasses \cite{panchenko2013sherrington,talagrand2010mean}, in the analysis of TAP equations \cite{fan2021tap}, and in algorithmic frameworks such as approximate message passing and state evolution \cite{zou2022concise}. Closely related asymptotic decoupling phenomena also appear throughout contemporary high-dimensional inference and machine learning \cite{krzakala2024statistical,zdeborova2016statistical}, where Bayesian posteriors and Gibbs measures often admit effective one-dimensional descriptions in the large-system limit.

Questions concerning effective scalar descriptions, asymptotic decoupling, and the probabilistic structure of high-dimensional Gibbs measures have become increasingly central in modern high-dimensional statistics, Bayesian inference, and statistical physics; see, for example, the recent survey \cite{maleki2026high} and the references therein. In Bayes-optimal settings, these phenomena are closely related to Nishimori identities and replica-symmetric structures \cite{contucci2009spin,nishimori2001statistical}, and have led to a detailed understanding of several canonical inference models \cite{barbier2016mutual,barbier2019optimal,barbier2020mutual}. More broadly, Bayesian methods have become a central tool in modern high-dimensional statistics; see \cite{banerjee2021bayesian} for a recent overview.

The rigorous analysis of these local asymptotic descriptions is often technically delicate. Existing approaches typically rely on interpolation methods \cite{barbier2019adaptive,panchenko2013sherrington,talagrand2010mean}, asymptotic descriptions of message passing algorithms \cite{zou2022concise}, TAP equations \cite{fan2021tap}, or detailed analyses of free energies and overlap structures \cite{panchenko2013sherrington,talagrand2010mean}. In Bayesian inference settings, related asymptotic scalar descriptions have also been obtained through leave-one-out and cavity-based approaches \cite{barbier2025performance,barbier2022marginals,saenz2025characterizing}. In particular, the specific combination of a leave-one-out cavity decomposition with a Radon-Nikodym tilt that motivates the present work first appeared in the companion paper \cite{saenz2025characterizing}, in the context of high-dimensional Bayesian generalized linear models. There, the full posterior is represented as a tilt of leave-a-variable-out and leave-an-observation-out posteriors; see, in particular, Theorem 2 of that work. The corresponding tilt is shown to admit an asymptotically Gaussian description, providing a high-dimensional analogue of classical local asymptotic analysis.

From the perspective of classical asymptotic statistics, this construction is naturally connected with Le Cam's theory, in which contiguity and changes of measure through likelihood-ratio tilts play a central role \cite{LeCamYang1990,van2000asymptotic}. During the development of the companion GLM paper \cite{saenz2025characterizing}, P.S. observed that the same change-of-measure mechanism applies beyond generalized linear models to a broader family of mean-field Gibbs systems. In the models considered here, this mechanism can be formulated particularly transparently, because the cavity decomposition, the concentration of the relevant order parameters, and the comparison of probability measures can be treated as separate components.

The purpose of the present paper is therefore not to introduce the leave-one-out tilting idea itself, but to isolate its underlying probabilistic mechanism and develop it as a comparatively direct and modular route to local replica-symmetric convergence in mean-field Gibbs systems. The central perspective is that replica-symmetric local convergence can be formulated as a quantitative contiguity problem between the full Gibbs measure and an associated cavity measure. Rather than relying primarily on interpolation identities, the approach developed here is based on cavity decompositions of the Hamiltonian, direct comparison of probability measures through Radon-Nikodym derivatives, and quantitative control of Hellinger-type distances in the spirit of asymptotic statistical theory \cite{LeCamYang1990,van2000asymptotic}.

At a conceptual level, the method separates the proof into three comparatively independent components: concentration of the relevant order parameters, identification of the asymptotic cavity fields, and quantitative comparison between the finite-dimensional and limiting Gibbs measures. One of the advantages of this decomposition is that it isolates the model-dependent concentration problem from the probabilistic mechanism governing the emergence of the asymptotic scalar channel. As a consequence, the framework can naturally leverage tools from concentration of measure theory \cite{boucheron2003concentration}, such as Poincaré inequalities, log-Sobolev inequalities, transportation inequalities, or related functional inequalities, without substantially modifying the remaining structure of the argument. In the particular setting analysed in this paper, the required concentration estimates rely on the log-Sobolev inequalities established in \cite{bauerschmidt2019very}.

A second advantage of the method is that it relies only weakly on the Gaussianity of the disorder. In the Sherrington-Kirkpatrick model analysed here, Gaussianity enters mainly through the concentration estimates and the asymptotic description of the cavity fields. In contrast with interpolation-based approaches, where Gaussian integration by parts plays a central structural role in the derivation of the fundamental identities, the cavity-contiguity framework developed in this paper is largely measure-theoretic and probabilistic in nature. This suggests that the strategy should be comparatively robust with respect to modifications of the disorder distribution and potentially compatible with broader families of distributions.

The argument also possesses a standard probabilistic structure which is naturally compatible with stochastic-process formulations. Since the proof is based on comparison of probability measures through Radon-Nikodym derivatives and Hellinger-type quantities, the same general strategy can in principle be adapted to path-space settings through the corresponding notion of Hellinger processes \cite{jacod2013limit}. In particular, the framework appears well-suited for the study of asymptotic scalar descriptions associated with Langevin dynamics and related stochastic evolutions, which is the subject of ongoing work.

The main contribution of the paper is therefore methodological rather than purely model-specific. We do not attempt to treat the largest possible class of models in full generality. Instead, we formulate the cavity-contiguity framework at an abstract level and then implement it completely in a canonical example: the high-temperature Sherrington-Kirkpatrick model. This setting is sufficiently rich to exhibit all the main ingredients of the argument while remaining explicit enough that the probabilistic mechanism underlying replica-symmetric local convergence can be analysed transparently. One of the strengths of the approach is precisely that the resulting proofs are comparatively short, modular, and probabilistically explicit.

In this sense, \cite{saenz2025characterizing} and the present article should be viewed as companion papers. The former contains the first model-specific appearance of the leave-one-out tilting mechanism in a high-dimensional Bayesian generalized linear model and identifies its connection with classical local asymptotic analysis. The present paper isolates the underlying cavity-contiguity mechanism, formulates it as a modular proof strategy, and implements it in a canonical spin-glass model. The two works therefore emphasise complementary aspects of the same probabilistic perspective.

The main limitation of the present approach is that the quantitative bounds obtained through the current implementation are not expected to yield optimal convergence rates. The framework prioritises structural transparency, modularity, and probabilistic robustness over sharp quantitative optimisation. Moreover, although the method separates the concentration problem from the asymptotic identification step, sufficiently strong concentration estimates for the relevant order parameters remain essential for the argument to close.

Beyond the Sherrington-Kirkpatrick model itself, the structure of the argument suggests possible extensions to a broader class of mean-field Gibbs systems, particularly Bayesian inference models on the Nishimori line \cite{barbier2022strong,nishimori2001statistical,zdeborova2016statistical} or with mismatch \cite{barbier2025performance,saenz2025characterizing}. More broadly, the framework developed here is closely connected to the modern perspective that high-dimensional Gibbs systems often admit asymptotically decoupled low-dimensional descriptions.

\paragraph{Organisation of the paper.} In Section \ref{sec:general-strategy}, we describe the general cavity-contiguity framework for proving replica-symmetric convergence of local marginals. Section \ref{sec:example} introduces the high-temperature Sherrington-Kirkpatrick model, develops the associated cavity decomposition, and implements the general strategy in this setting. Some technical proofs are deferred to Appendices \ref{app:cholesky} and \ref{app:proof_applications}.


\section{Replica-symmetric marginals through contiguity}\label{sec:general-strategy}

In this section we describe the strategy at an informal level, with the aim of separating the structural probabilistic mechanism from the model-dependent estimates required to implement it. In Section \ref{sec:example}, we implement, for the high-temperature Sherrington-Kirkpatrick model, the strategy described here.

Consider a sequence of Gibbs measures on product spaces of the form
\begin{equation*}
    \proba_N(d\sigma\mid D) = \frac{1}{Z_N(D)}\exp\{H_N(\sigma,D)\}\prod_{i=1}^N\mu(d\sigma_i),
\end{equation*}
where $D$ denotes the disorder of the model. In the SK model, $D$ is a symmetric matrix of Gaussian couplings, while in Bayesian inference models it may contain a collection of observations. For some fixed $i_0\geq1$, the goal is to understand the asymptotic behaviour of the $i_0$-th marginal $\proba_{N,i_0}(\cdot\mid D)$. In particular, we will consider the Total Variation norm, defined below, to establish their convergence.

\begin{definition}
    If $\mu$ is a signed measure on $(\Omega,\mathcal{F})$, we define its total variation norm according to
    \begin{equation*}
        \|\mu\|_{\rm TV} = \sup\left\{ \int f  d\mu : f \ \text{is real valued and $\mathcal{F}$-measurable with } |f| \leq 1 \right\}.
    \end{equation*}
\end{definition}

The first step is to establish a cavity decomposition of the Hamiltonian. Namely, after removing a coordinate $i_0$, one seeks an identity of the form
\begin{equation}\label{eq:H-decomposition}
    H_N(\sigma,D) = H_{N-1}(\tsigma,\widetilde D) + \Delta H_N(\sigma_{i_0},\tsigma,D_{i_0}) + \delta_N(\sigma,D),
\end{equation}
where $\tsigma$ denotes the configuration obtained from $\sigma$ by removing its $i_0$-th coordinate, $\Delta H_N$ contains the leading interaction between the removed coordinate and the cavity system, and $\delta_N$ is an error term. Here we assume that the disorder can be decomposed as $D = (\widetilde D, D_{i_0})$ in accordance with \eqref{eq:H-decomposition}: $\widetilde D$ are the elements of $D$ appearing in the term $H_{N-1}$ and $D_{i_0}$ those appearing in $\Delta H_N$.

Let $\proba_N'(\cdot\mid D)$ be the Gibbs measure associated with the Hamiltonian $H_N-\delta_N$. One then needs to show that the correction term $\delta_N$ is negligible at the level of probability measures. A convenient sufficient condition is
\begin{equation*}
    \varepsilon_N := \E\thermal{\delta_N}-\E\thermal{\delta_N}' \longrightarrow 0,
\end{equation*}
where $\thermal{\cdot}$ and $\thermal{\cdot}'$ denote expectations under $\proba_N$ and $\proba_N'$, respectively. By Pinsker's inequality and Jensen's inequality, this implies
\begin{equation*}
    \E\|\proba_N(\cdot\mid D)-\proba_N'(\cdot\mid D)\|_{\rm TV}^2 \leq \frac{\varepsilon_N}{2}.
\end{equation*}

The second step is to identify the relevant order parameters and prove their concentration. These quantities are model-dependent and typically include overlaps, magnetisations, empirical moments, or correlations with a planted signal. Denoting them collectively by $\Gamma_N$, the required statement takes the form
\begin{equation*}
    \E\thermal{\|\Gamma_N-\gamma\|^2}_c \longrightarrow 0
\end{equation*}
for some deterministic limit $\gamma$. Here $\thermal{\cdot}_c$ denotes expectation under the cavity measure. Proving this concentration is usually the most model-specific part of the strategy and may require functional inequalities, Nishimori identities, or other concentration tools.

The cavity measure itself is defined from the cavity Hamiltonian according to
\begin{equation*}
    \proba_{N,c}(d\sigma\mid \widetilde D) = \frac{1}{Z_{N,c}}\exp\{H_{N-1}(\tsigma,\widetilde D)\}\prod_{i=1}^N\mu(d\sigma_i).
\end{equation*}
Under this measure, the coordinate $\sigma_{i_0}$ is independent of the cavity system and distributed according to the prior measure $\mu$.
The cavity decomposition allows one to compare directly the measures $\proba_N'$ and $\proba_{N,c}$. More precisely, a direct computation gives
\begin{equation*}
    \frac{d\proba_N'}{d\proba_{N,c}}(\sigma) = \frac{\exp\{\Delta H_N(\sigma_{i_0},\tsigma,D_{i_0})\}}{\thermal{\exp\{\Delta H_N(\sigma_{i_0},\tsigma,D_{i_0})\}}_c}.
\end{equation*}
Projecting this identity onto the coordinate $i_0$ yields the Radon-Nikodym derivative of the marginal law $\proba_{N,i_0}'$ with respect to the prior measure $\mu$. Denoting this derivative by $\mathcal{D}_{N,i_0}$, one obtains
\begin{equation*}
    \mathcal{D}_{N,i_0}(s) = \frac{\int \exp\{H_{N-1}(\tsigma,\tilde{D}) + \Delta H_N(s,\tsigma,D_{i_0})\} \prod_{i\backslash\{i_0\}}\mu(d\sigma_i)}{\thermal{\exp\{\Delta H_N(\sigma_{i_0},\tsigma,D_{i_0})\}}_c},
\end{equation*}
for $s$ in the support of $\mu$.

The next step is to identify a limiting random density $\mathcal{D}_{i_0}$. This typically arises from the convergence of the cavity fields appearing in $\Delta H_N$. In many situations, after conditioning on the order parameters, these fields become asymptotically Gaussian. The limiting cavity law $\mu_{i_0}$ is then defined through
\begin{equation*}
    \frac{d\mu_{i_0}}{d\mu}(s) = \mathcal{D}_{i_0}(s).
\end{equation*}

To compare the finite-$N$ and limiting marginals, a natural quantity to study is the Hellinger integral, defined below.

\begin{definition}\label{def:hell-int}
    Let $(\Omega, \mathcal{F})$ be a measurable space endowed with three probability measures $\proba$, $\proba'$, and $\mathbb{Q}$ such that $\proba \ll \mathbb{Q}$ and $\proba' \ll \mathbb{Q}$; and consider the Radon–Nikodym derivatives $d\proba/d\mathbb{Q}$ and $d\proba'/d\mathbb{Q}$. The Hellinger integral of $\proba$ and $\proba'$ is defined according to
    \begin{equation*}
        I(\proba,\proba') = \E_\mathbb{Q}\left( \sqrt{\frac{d\proba}{d\mathbb{Q}} \cdot \frac{d\proba'}{d\mathbb{Q}}} \right).
    \end{equation*}
\end{definition}

\begin{remark}
    It can be proved that the value of the Hellinger integral does not depend on the background measure $\mathbb{Q}$ used to define it.
\end{remark}

Since one can easily see that
\begin{equation*}
    \|\proba_{N,i_0}'-\mu_{i_0}\|_{\rm TV}^2 \leq 2\big(1-I(\proba'_{N,i_0},\mu_{i_0})\big),
\end{equation*}
for our purposes, it suffices to prove that $I(\proba'_{N,i_0},\mu_{i_0})\to1$ in expectation. However, to be able to define the Hellinger integral, one needs to place the finite-$N$ cavity fields and their limiting counterparts on a common probability space. This is achieved through an explicit coupling construction, typically obtained by coupling the covariance structure of the finite-dimensional cavity fields with the limiting covariance prescribed by the order parameters. In the SK model, this is done through a coupling of Gaussian vectors whose covariance matrices converge because of overlap concentration. This is explained in more detail in Section \ref{sec:coupling}.

The final step, after establishing the above coupling, is to prove that
\begin{equation*}
    \E I(\proba'_{N,i_0},\mu_{i_0})\longrightarrow1.
\end{equation*}
At a heuristic level, this follows from the concentration of the order parameters together with continuity properties of the cavity densities as functions of the cavity fields. In practice, the proof typically proceeds by establishing quantitative control of logarithmic moments of the Radon-Nikodym derivatives and then replacing the finite-$N$ cavity fields by their limiting counterparts using the coupling construction.

As presented in Section \ref{sec:example}, all these steps can be carried out explicitly in the high-temperature Sherrington-Kirkpatrick. For other mean-field models, such as Bayesian rank-one inference problems, the same structure remains available, although the identification and concentration of the relevant order parameters may involve substantial adaptations of the proof.


\section{Example: high-temperature Sherrington-Kirkpatrick model}\label{sec:example}

\subsection{Setting and notation}

For each $N\geq1$, let $J=(J_{ij})_{1\leq i,j\leq N}\in\R^{N \times N}$ be a Wigner matrix. That is, $J$ is symmetric and its entries satisfy
\begin{equation*}
    J_{ij} = \begin{cases}
            \frac{1}{\sqrt{N}}z_{ij} & \text{if } i < j \\
            \sqrt{\frac{2}{N}}\,z_{ii} & \text{if } i=j
        \end{cases}
\end{equation*}
with $(z_{ij})_{1 \leq i \leq j \leq N }$ independent standard Gaussian random variables.

We consider the Sherrington-Kirkpatrick Hamiltonian
\begin{equation*}
    H_N(\sigma) = \beta \sum_{i,j=1}^N J_{ij}\sigma_i\sigma_j + h\sum_{i=1}^N \sigma_i,
\end{equation*}
defined on spin configurations $\sigma=(\sigma_1,\dots,\sigma_N)\in\{-1,1\}^N$, where $\beta>0$ is the inverse temperature and $h\neq0$ is the external magnetic field. 

\begin{remark}
    Here we define the double sum in the Hamiltonian to include all the crossed terms, not just the ones above the diagonal. This does not change much in the proofs but should be kept in mind that it does change the interpretation of the inverse temperature parameter by $1/2$. Indeed, our high-temperature region $\beta < 1/4$ is the same as the one $\beta < 1/2$ present in \cite{talagrand2010mean}.
\end{remark}

The associated Gibbs measure on $\{-1,1\}^N$ is given by
\begin{equation*}
    \proba_N(\sigma\mid J) = \frac{1}{Z_N(J)}\exp\{H_N(\sigma,J)\},
\end{equation*}
where $Z_N(J) > 0$ is the partition function. Expectation with respect to the Gibbs measure will be denoted by $\thermal{\cdot}$. Expectation and variance with respect to the disorder $J$ will be denoted by $\E(\cdot)$ and $\Var{\cdot}$, respectively.

Throughout the paper, $(\sigma^{(l)})_{l\geq1}$ will denote independent samples, or replicas, drawn from the Gibbs measure $\proba_N(\cdot\mid J)$. Their overlaps are defined by
\begin{equation*}
    Q_{l,l'} := \frac{1}{N}\sum_{i=1}^N \sigma_i^{(l)}\sigma_i^{(l')}.
\end{equation*}
In the high-temperature regime considered in this paper, it is known that the overlap concentrates around the unique solution of the replica-symmetric fixed point equation
\begin{equation}\label{eq:fpe}
    q = \E_z \tanh^2(2\beta\sqrt{q}\,z+h),
\end{equation}
where $z$ is a standard Gaussian random variable.


\subsection{Cavity decomposition and replica symmetry}

Fix $i_0\in[N]$. Given a configuration $\sigma\in\{-1,1\}^N$, let
\begin{equation*}
    \tsigma = (\sigma_1,\dots,\sigma_{i_0-1},\sigma_{i_0+1},\dots,\sigma_N)
\end{equation*}
be the vector obtained from $\sigma$ by removing its $i_0$-th coordinate. Similarly, let $\tJ\in\R^{(N-1)\times(N-1)}$ be the matrix obtained from $J$ by deleting its $i_0$-th row and column, and let
\begin{equation*}
    J_{i_0} = (J_{1i_0},\dots,J_{i_0-1,i_0},J_{i_0+1,i_0},\dots,J_{Ni_0})\in\R^{N-1}
\end{equation*}
be the vector formed by the $i_0$-th row of $J$ without its diagonal entry.

The associated cavity Hamiltonian is defined by
\begin{equation*}
    H_{N-1}(\tsigma,\tJ) = \beta \sum_{i,j\neq i_0} J_{ij}\sigma_i\sigma_j + h\sum_{i\neq i_0}\sigma_i.
\end{equation*}
The corresponding cavity Gibbs measure on $\{-1,1\}^N$ is
\begin{equation*}
    \proba_{N,c}(\sigma\mid \tJ) = \frac{1}{Z_{N,c}(\tJ)}\exp\{H_{N-1}(\tsigma,\tJ)\},
\end{equation*}
where $Z_{N,c}>0$ is the associated partition function.

Under the cavity measure, the spin $\sigma_{i_0}$ is independent of $\tsigma$ and distributed according to the prior measure $\mu$. Expectation with respect to $\proba_{N,c}$ will be denoted by $\thermal{\cdot}_c$. The following lemma gives the cavity decomposition of the SK Hamiltonian. For this, let
\begin{equation*}
    \Theta_N := J_{i_0}^{\top}\tsigma = \sum_{i\neq i_0}J_{ii_0}\sigma_i = \frac{1}{\sqrt{N}}\sum_{i\neq i_0} z_{ii_0}\sigma_i.
\end{equation*}
In physical terms, this is the cavity field acting on the coordinate $i_0$.

\begin{lemma}\label{lem:cavity-decomposition}
    For every $i_0\in[N]$, we have
    \begin{equation*}
        H_N(\sigma,J) = H_{N-1}(\tsigma,\tJ) + \Delta H_N(\sigma_{i_0},\Theta_N) + \delta_N(J);
    \end{equation*}
    where $\Delta H_N(\sigma_{i_0},\Theta_N) :=  2\beta \sigma_{i_0}\Theta_N + h\sigma_{i_0}$ and $\delta_N(J) = \beta J_{i_0i_0}$.
\end{lemma}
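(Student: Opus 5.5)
The plan is a direct algebraic computation: split the double sum in $H_N$ according to how many of its two indices equal $i_0$, and use only the symmetry of $J$ and the fact that $\sigma_{i_0}^2=1$. No probabilistic input is needed, since the identity holds deterministically for every $J$ and every $\sigma\in\{-1,1\}^N$.

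First I partition the index set $[N]\times[N]$ into four disjoint pieces: $\{i\neq i_0,\ j\neq i_0\}$, $\{i=i_0,\ j\neq i_0\}$, $\{i\neq i_0,\ j=i_0\}$ and $\{i=j=i_0\}$. This gives
\begin{equation*}
\beta\sum_{i,j=1}^N J_{ij}\sigma_i\sigma_j = \beta\sum_{i,j\neq i_0}J_{ij}\sigma_i\sigma_j + \beta\sigma_{i_0}\sum_{j\neq i_0}J_{i_0j}\sigma_j + \beta\sigma_{i_0}\sum_{i\neq i_0}J_{ii_0}\sigma_i + \beta J_{i_0i_0}\sigma_{i_0}^2 .
\end{equation*}
The two mixed sums are handled as follows.
\begin{itemize}
\item Since $J$ is symmetric, $J_{i_0j}=J_{ji_0}$, so both mixed sums equal $\beta\sigma_{i_0}\sum_{i\neq i_0}J_{ii_0}\sigma_i$.
\item By definition of $J_{i_0}$ and $\Theta_N$, this common value is $\beta\sigma_{i_0}\Theta_N$, so the two mixed sums together contribute $2\beta\sigma_{i_0}\Theta_N$.
\end{itemize}
For the diagonal term, $\sigma_{i_0}\in\{-1,1\}$ gives $\sigma_{i_0}^2=1$, so it reduces to $\beta J_{i_0i_0}$. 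This is the claimed $\delta_N(J)$.

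Next I split the field term as $h\sum_{i=1}^N\sigma_i = h\sum_{i\neq i_0}\sigma_i + h\sigma_{i_0}$. The first block of the interaction sum together with $h\sum_{i\neq i_0}\sigma_i$ is exactly $H_{N-1}(\tsigma,\tJ)$ as defined above. The remaining terms $2\beta\sigma_{i_0}\Theta_N+h\sigma_{i_0}$ are $\Delta H_N(\sigma_{i_0},\Theta_N)$. Adding the pieces yields the stated decomposition.

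There is no real obstacle here; the only points requiring care are the bookkeeping of the factor $2$ from the two off-diagonal contributions (which relies on symmetry of $J$ and on the convention that the double sum runs over all ordered pairs), and noting that $\delta_N$ depends on $J$ but not on $\sigma$. As a remark to record for later use, $\delta_N$ is therefore constant in $\sigma$ and cancels against the partition function. Hence the measure $\proba_N'$ of Section \ref{sec:general-strategy} coincides with $\proba_N$ and $\varepsilon_N=0$, so the first step of the general strategy is trivial in this model.
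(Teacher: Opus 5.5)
Your proposal is correct and follows the same route as the paper: split the double sum according to whether the indices equal $i_0$, merge the two mixed sums into $2\beta\sigma_{i_0}\Theta_N$ using the symmetry of $J$, and reduce the diagonal term with $\sigma_{i_0}^2=1$. Your explicit splitting of the field term and your remark that $\delta_N$ is $\sigma$-independent (so $\varepsilon_N=0$) both match the paper's proof and the remark that follows the lemma.
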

\begin{proof}
    Decompose the interaction term according to whether indices are equal to $i_0$ or not. Since $J$ is symmetric,
    \begin{equation*}
        \sum_{i,j=1}^N J_{ij}\sigma_i\sigma_j = \sum_{i,j\neq i_0} J_{ij}\sigma_i\sigma_j + 2\sigma_{i_0}\sum_{i\neq i_0}J_{ii_0}\sigma_i + J_{i_0i_0}\sigma_{i_0}^2.
    \end{equation*}
    Because $\sigma_{i_0}^2=1$, the result follows immediately.
\end{proof}

\begin{remark}
    In the SK model, $\delta_N$ does not depend on $\sigma$ and can then be absorbed into the normalisation constant. In general, this error terms should be handled as described in Section \ref{sec:general-strategy}.
\end{remark}

The cavity decomposition shows that the local behaviour of the spin $\sigma_{i_0}$ is governed by the cavity field $\Theta_N$. Under the cavity measure, conditional on the replicas $(\tsigma^{(l)})_{l\geq1}$, the family $(\Theta_N^{(l)})_{l\geq1}$ is Gaussian with covariance structure determined by the overlaps
\begin{equation*}
    Q_{l,l'} := \frac{1}{N}\sum_{i\neq i_0}\sigma_i^{(l)}\sigma_i^{(l')}.
\end{equation*}

The replica-symmetric regime corresponds to concentration of these overlaps around a deterministic value. The following proposition establishes overlap concentration in the high-temperature regime. This is a standard result which can be found, for example, in \cite{talagrand2010mean}. Here we present a self-contained proof which, in contrast to that in \cite{talagrand2010mean}, does not use interpolation techniques.

\begin{proposition}\label{prop:rs}
    Assume that $\beta<1/4$ and $h\neq0$. Then there exists a unique solution $q\in[0,1]$ of \eqref{eq:fpe} and, for some fixed $C > 0$, we have that
    \begin{equation*}
        \E\thermal{(Q_{1,2}-q)^2}_c \leq \frac{C}{N}
    \end{equation*}
\end{proposition}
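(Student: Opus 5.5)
We split the argument into existence and uniqueness of the fixed point, and then a non-interpolation concentration argument built from the log-Sobolev inequality of \cite{bauerschmidt2019very} together with a cavity self-consistency step.

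\textbf{Uniqueness of $q$.} Set $\Phi(q)=\E_z\tanh^2(2\beta\sqrt q\,z+h)$, a continuous map $[0,1]\to[0,1]$, so a fixed point exists. For uniqueness we show $\Phi$ is a contraction when $\beta<1/4$. Gaussian integration by parts gives
\[
\Phi'(q)=2\beta^2\,\E_z\big[(\tanh^2)''(2\beta\sqrt q z+h)\big].
\]
Since $|(\tanh^2)''|\le 2$, we get $|\Phi'|\le 4\beta^2<1$.

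\textbf{Step 1: concentration around a random centre.} Since $\beta<1/4$ (i.e.\ $\beta<1/2$ in Talagrand's normalisation), the interaction matrix $2\beta\tJ$ has operator norm below $1$ with overwhelming probability. The results of \cite{bauerschmidt2019very} then give a log-Sobolev (hence Poincaré) inequality for the cavity Gibbs measure, with a constant uniform in $N$ on that event.

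Apply the Poincaré inequality to $f(\tsigma^{(1)},\tsigma^{(2)})=Q_{1,2}$ under the product of two copies of the measure. Its discrete gradient has size $O(1/N)$ in each of the $2N$ coordinates, so
\[
\thermal{(Q_{1,2}-\thermal{Q_{1,2}}_c)^2}_c\le C/N .
\]
By the same argument applied to $\frac1N\sum_i \sigma_i\thermal{\sigma_i}_c$, the quantity $\thermal{Q_{1,2}}_c$ is close to $\bar q_N:=\frac1N\sum_{i\ne i_0}\thermal{\sigma_i}_c^2$. On the complementary bad event, we use the trivial bound $|Q_{1,2}|\le 1$ times an exponentially small probability.

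\textbf{Step 2: $\E(\bar q_N-q)^2\le C/N$.} This is the main obstacle. We would first show that $\bar q_N$ concentrates around $\E\bar q_N$ with variance $O(1/N)$ in the disorder. The tool is the Gaussian Poincaré inequality in $J$, using
\[
\partial_{J_{ij}}\thermal{\sigma_k}=\beta\big(\thermal{\sigma_k\sigma_i\sigma_j}-\thermal{\sigma_k}\thermal{\sigma_i\sigma_j}\big),
\]
together with the covariance bounds that follow from the log-Sobolev inequality, namely that the correlation matrix has bounded operator norm.

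Next we identify $\E\bar q_N$ via the cavity step applied to a single site $k$. We have $\thermal{\sigma_k}=\tanh(2\beta\thermal{\Theta}+h)+\text{error}$. Conditionally on the cavity system, the field $\Theta=\frac{1}{\sqrt N}\sum_i z_{ik}\sigma_i$ has its quenched mean $\thermal{\Theta}_c$ distributed as a centred Gaussian with variance $\bar q$. Its thermal fluctuation has variance $1-Q$, which is not small and must be handled exactly: one computes
\[
\thermal{\sigma_k}=\frac{\thermal{\sinh(2\beta\Theta+h)}_c}{\thermal{\cosh(2\beta\Theta+h)}_c},
\]
and shows that the Gaussian thermal part of $\Theta$ (via Step 1's concentration) cancels in the ratio to leading order. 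This yields
\[
\E\thermal{\sigma_k}^2=\Phi(\E\bar q_N)+O(N^{-1/2})
\]
and, after a second-order expansion, an $O(1/N)$ error in the square.

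\textbf{Conclusion.} The approximate fixed-point relation $\E\bar q_N=\Phi(\E\bar q_N)+O(1/N)$ combines with the contraction property, giving
\[
|\E\bar q_N-q|\le \frac{O(1/N)}{1-4\beta^2}.
\]
Combining Steps 1 and 2 yields $\E\thermal{(Q_{1,2}-q)^2}_c\le C/N$.

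We expect the hardest point to be the cancellation of the thermal part of $\Theta$ at rate $1/N$ rather than $N^{-1/2}$. We would try first to write $\thermal{\Theta}$-dependence explicitly, Taylor-expanding the ratio in the centred fluctuation and using the log-Sobolev estimates to control the third moments.
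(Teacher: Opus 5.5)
Your Step 1, the disorder-variance bound for $\bar q_N$, and the contraction proof of uniqueness are essentially the paper's: a Poincaré inequality on two replicas, Gaussian Poincaré in $z$ combined with the Poincaré bound on quadratic forms, and $|\Phi'|\le 4\beta^2$. The genuine difference is how you locate the centre. The paper argues softly: it feeds any subsequential limit $q'$ of $\E\thermal{Q_{12}}$ into Proposition \ref{prop:almost-main} and Corollary \ref{cor:local-observables} to show that $q'$ solves \eqref{eq:fpe}. It uses Lemma \ref{lem:cavity-full-overlap-mean} to pass between cavity and full overlaps, and concludes by uniqueness. You instead aim for a quantitative self-consistency relation closed by contraction. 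That route is self-contained, since it does not invoke the marginal theorem inside the proof of its own hypothesis. It also yields the rate needed for the bias term $(\E\thermal{Q_{12}}_c-q)^2=O(1/N)$, which the compactness argument does not provide.

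As written, however, your relation does not close. The cavity formula puts the full-system quantity $\E\thermal{\sigma_k}^2=\E\thermal{Q_{12}}$ on the left and $\Phi(\E\bar q_N)$, with $\bar q_N$ the cavity overlap, on the right. Writing ``$\E\bar q_N=\Phi(\E\bar q_N)+\dots$'' silently identifies the mean overlaps of two different systems. You need a bound $|\E\thermal{Q_{12}}-\E\bar q_N|\le\eta_N$. This is precisely Lemma \ref{lem:cavity-full-overlap-mean} (interpolate $t\mapsto\thermal{\cdot}_t$ between the two measures, then Cauchy--Schwarz and Poincaré), with $\eta_N=O(N^{-1/2})$. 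That rate suffices. If $b_N=\Phi(a_N)+\epsilon_N$ and $|a_N-b_N|\le\eta_N$, contraction gives $|a_N-q|\le(\epsilon_N+\eta_N)/(1-4\beta^2)$, and you only need $|a_N-q|=O(N^{-1/2})$. So the $1/N$ cancellation you single out as the hardest point is not needed.

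What is needed, $\epsilon_N=O(N^{-1/2})$, is still only asserted in your proposal. It does not require Gaussianity of the thermal fluctuation. Put $a=2\beta\thermal{\Theta}_c+h$, $\dot\Theta=\Theta-\thermal{\Theta}_c$, $S=\thermal{\sinh(2\beta\dot\Theta)}_c$ and $R=\thermal{\cosh(2\beta\dot\Theta)}_c\ge 1$. Then exactly $\thermal{\sigma_k}=(\tanh a+S/R)/(1+(S/R)\tanh a)$, so $|\thermal{\sigma_k}-\tanh a|\le 2|S|$. The even part of $\dot\Theta$ cancels identically and only its asymmetry matters. Integrating out $z_k$ over two replicas, with $m=\thermal{\tsigma}_c$,
\[
\E_{z_k}S^2=\thermal{e^{2\beta^2(v_1+v_2)}\sinh(4\beta^2c_{12})}_c,\qquad v_l=\tfrac1N\|\tsigma^{(l)}-m\|^2,\quad c_{12}=\tfrac1N(\tsigma^{(1)}-m)\cdot(\tsigma^{(2)}-m).
\]
The crude bound $|\sinh x|\le C|x|$ gives only $\E|S|=O(N^{-1/4})$. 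What works is the following. By independence of the replicas, the linear term equals $\frac{4\beta^2}{N}\sum_i\thermal{e^{2\beta^2 v}(\sigma_i-m_i)}_c^2$. This is $O(N^{-2})$ by Poincaré, since $e^{2\beta^2v}$ has thermal variance $O(1/N)$ and the covariance matrix has bounded operator norm. The remainder is $O(\thermal{c_{12}^2}_c)=O(1/N)$. Hence $\E|S|=O(N^{-1/2})$. Combined with $|\E\Phi(\bar q_N)-\Phi(\E\bar q_N)|\le 4\beta^2\sqrt{\Var{\bar q_N}}$, this gives $\epsilon_N=O(N^{-1/2})$.

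One caveat is shared with the paper's appendix. The Bauerschmidt--Bodineau criterion concerns the spectral width of the coupling matrix (its top eigenvalue after shifting it to be positive semidefinite), not its operator norm. For $2\beta\tJ$ the width is about $8\beta$, so, as quoted, the criterion covers only $\beta<1/8$ in this normalisation.
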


\begin{remark}
    We will prove this proposition making use of the main result of \cite{bauerschmidt2019very}, a log-Sobolev inequality valid in the high-temperature regime. The resulting proof is quite concise and, as far as we know, new. 
\end{remark}

The proof of Proposition \ref{prop:rs} is deferred to Appendix \ref{app:proof_applications}. Throughout the paper, the value $q$ will denote the unique replica-symmetric solution appearing in Proposition \ref{prop:rs}.


\subsection{Convergence of marginals}\label{sec:main}

We now state the main result of the paper. Fix $i_0\in[N]$ and let $z$ be a standard Gaussian random variable. Define the random probability measure $\mu_{i_0}$ on $\{-1,1\}$ by
\begin{equation}\label{eq:limit-marginal}
    \mu_{i_0}(\sigma) = \frac{\exp\{(2\beta\sqrt{q}\,z+h)\sigma\}}{2\cosh(2\beta\sqrt{q}\,z+h)}.
\end{equation}
As explained in Section \ref{sec:general-strategy}, this asymptotical marginal is obtained by deriving the asymptotic derivative between the replica and full measures.

The following proposition establishes convergence of the Gibbs marginal of a fixed spin toward this replica-symmetric cavity law when the overlaps concentrate towards a fixed constant.

\begin{proposition}\label{prop:almost-main}
    Let $\proba_{N,i_0}(\cdot\mid J)$ denote the marginal law of $\sigma_{i_0}$ under the Gibbs measure $\proba_N(\cdot\mid J)$ and assume that there are constants $K > 0$ and $q \in[-1,1]$ such that $\E\thermal{(Q_{12}-q)^2}_c\leq K/N$. Then, there is some constant $C>0$ such that
    \begin{equation*}
        \E\|\proba_{N,i_0}(\cdot\mid J)-\mu_{i_0}(\cdot)\|_{\rm TV}^2 \leq \frac{C}{N^{1/8}}.
    \end{equation*}
\end{proposition}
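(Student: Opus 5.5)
Since $\delta_N=\beta J_{i_0i_0}$ does not depend on $\sigma$ (Lemma~\ref{lem:cavity-decomposition}), we have $\proba_N=\proba_N'$, so there is no correction term to handle. Write $Y:=2\beta\Theta_N$, a function of $J_{i_0}$ and $\tsigma$. The marginal density of $\sigma_{i_0}$ with respect to the uniform prior is then
\begin{equation*}
\mathcal{D}_{N,i_0}(s)=\frac{\thermal{e^{sY}}_c}{\tfrac12\thermal{e^{Y}+e^{-Y}}_c}e^{hs}\Big/\Big(\tfrac12\textstyle\sum_{s'}e^{hs'}\tfrac{\thermal{e^{s'Y}}_c}{\tfrac12\thermal{e^{Y}+e^{-Y}}_c}\Big).
\end{equation*}
Since $\{-1,1\}$ has two points, the marginal is determined by the ratio $\thermal{e^{Y}}_c/\thermal{e^{-Y}}_c$, equivalently by the effective field
\begin{equation*}
F_N:=h+\tfrac12\log\frac{\thermal{e^{Y}}_c}{\thermal{e^{-Y}}_c}.
\end{equation*}
In this notation $\proba_{N,i_0}(s)=e^{sF_N}/(2\cosh F_N)$, while $\mu_{i_0}$ has field $F:=h+2\beta\sqrt q\,z$. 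The map $x\mapsto e^{sx}/(2\cosh x)$ is $1$-Lipschitz. Hence the squared total-variation distance is bounded by $4\min(1,(F_N-F)^2)$, and this is where the Hellinger integral collapses to an elementary estimate. The task therefore reduces to constructing a coupling of $z$ with $J$ such that $\E\min(1,|F_N-F|^2)\lesssim N^{-1/8}$.

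\textbf{Coupling.} Conditionally on $\tJ$, the vector $J_{i_0}$ is independent of the cavity measure. Decompose $\Theta_N=\thermal{\Theta_N}_c+(\Theta_N-\thermal{\Theta_N}_c)$. The first term, $J_{i_0}^\top\thermal{\tsigma}_c$, is Gaussian given $\tJ$ with variance $\|\thermal{\tsigma}_c\|^2/N=\thermal{Q_{12}}_c$. I would therefore define
\begin{equation*}
z:=\thermal{\Theta_N}_c/\sqrt{\thermal{Q_{12}}_c},
\end{equation*}
which is exactly standard Gaussian given $\tJ$, and hence unconditionally. (If $\thermal{Q_{12}}_c$ vanishes, use independent randomness; the event $\thermal{Q_{12}}_c<q/2$ has probability $O(1/N)$ by Chebyshev and Proposition~\ref{prop:rs}, since $q>0$ when $h\neq0$.) Then $\E(2\beta\sqrt{\thermal{Q_{12}}_c}z-2\beta\sqrt q z)^2\lesssim \E|\thermal{Q_{12}}_c-q|\lesssim N^{-1/2}$.

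\textbf{Main step: the fluctuation part.} Write $W=\Theta_N-\thermal{\Theta_N}_c$, so that
\begin{equation*}
F_N=h+2\beta\thermal{\Theta_N}_c+\tfrac12\log\frac{\thermal{e^{2\beta W}}_c}{\thermal{e^{-2\beta W}}_c}.
\end{equation*}
I need to show that the last term is small: heuristically $W$ is nearly symmetric Gaussian with variance $1-q$ under the Gibbs measure, so the logarithm tends to $0$. Since $\log x$ is Lipschitz near $1$ after normalisation, I will instead bound
\begin{equation*}
\E\big(\thermal{e^{2\beta W}}_c-\thermal{e^{-2\beta W}}_c\big)^2 .
\end{equation*}
Expanding the square with two replicas and averaging over the Gaussian $J_{i_0}$ given $\tJ$ (which is independent of the cavity replicas) yields explicit Gaussian formulas of the form $\thermal{\exp(\text{quadratic in }Q_{11},Q_{12},Q_{13},\dots)}_c$ involving the centred fields of replicas $1,2$ and the overlaps with the averaged configuration. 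By Jensen, $\thermal{Q_{13}}_c$-type terms are controlled through $\E\thermal{(Q_{12}-q)^2}_c\le K/N$. Each such term becomes a smooth function $g$ of overlaps, and $\E\thermal{g(\text{overlaps})}_c=g(q\text{-values})+O(N^{-1/2})$. The four terms $e^{\pm\pm}$ cancel at the replica-symmetric point by the symmetry $W\mapsto -W$ of the limiting law. This step is the main obstacle: the overlaps appear inside exponentials, and the relevant averages are over replicas of a random centre, which requires some care.

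\textbf{Denominators and assembly.} A lower bound on the denominators, $\thermal{e^{\pm2\beta W}}_c\ge e^{\mp 2\beta\thermal{W}_c}=1$, follows from Jensen, so the logarithmic ratio is Lipschitz in the difference. Combining the three error sources with Cauchy--Schwarz and truncation (using $\min(1,\cdot)$ together with Markov bounds) gives a polynomial rate. Losses from repeated square roots — Lipschitz of $\sqrt{\cdot}$, Cauchy--Schwarz twice, Jensen for moments — degrade $N^{-1/2}$ to the stated $N^{-1/8}$. I would not try to optimise this exponent.
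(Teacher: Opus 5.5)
Your proposal is correct, and it takes a genuinely different and much more direct route than the paper.

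\textbf{The paper's route.} The paper compares $\proba_N$ with the cavity measure whose $i_0$-th marginal is replaced by $\mu_{i_0}$, using a Hellinger integral on the full configuration space. It then proceeds as follows:
\begin{itemize}
\item it splits $\log I_{i_0}$ into five terms;
\item it replaces thermal averages inside logarithms by empirical means over $K$ replicas (Lemma~\ref{lem:log-bound}, error $K^{-1/2}$);
\item it couples the replica fields $\Theta_N^{(l)}$ to limiting Gaussians through Cholesky factors of the overlap matrix (Lemma~\ref{lem:cholesky_continuity}, error $\sqrt{K^3/N}$);
\item it balances the errors with $K=N^{1/4}$, which is where the rate $N^{-1/8}$ comes from.
\end{itemize}

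\textbf{Your route.} You use the fact that $\sigma_{i_0}$ is binary to reduce the total variation to $|F_N-F|$ for a scalar effective field. You build the coupling directly on the disorder: $z=\thermal{\Theta_N}_c/\sqrt{\thermal{Q_{12}}_c}$ is exactly $N(0,1)$ given $\tJ$. So $\mu_{i_0}$ is an explicit function of $J$, and the expectation in the statement has an unambiguous meaning. You then control the fluctuation term by exact Gaussian integration over $J_{i_0}$ with two replicas.

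\textbf{Trade-offs.} The paper's machinery is built to be modular and to transfer to general priors, other models and path space. Yours leans more heavily on the two-point spin space and the exact Gaussian law of $J_{i_0}$. In exchange, it dispenses with the replica law of large numbers and the Cholesky coupling, and it gives a better rate.

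\textbf{Your main step closes in one line.} Set $\dot\sigma=\tsigma-\thermal{\tsigma}_c$, $\dot Q_{ll'}=\dot\sigma^{(l)}\cdot\dot\sigma^{(l')}/N$ and $c=2\beta$. Then
\begin{equation*}
\E_{J_{i_0}}\big(\thermal{e^{cW}}_c-\thermal{e^{-cW}}_c\big)^2=4\thermal{e^{\frac{c^2}{2}(\dot Q_{11}+\dot Q_{22})}\sinh(c^2\dot Q_{12})}_c\leq C\thermal{|\dot Q_{12}|}_c .
\end{equation*}
Let $\mathcal{C}\succeq0$ be the cavity covariance matrix and $m=\thermal{\tsigma}_c$. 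Since $\mathrm{Var}_c(Q_{12})=(\|\mathcal{C}\|_F^2+2m^\top\mathcal{C}m)/N^2$, we get
\begin{equation*}
\thermal{\dot Q_{12}^2}_c=\|\mathcal{C}\|_F^2/N^2\leq\mathrm{Var}_c(Q_{12})\leq\thermal{(Q_{12}-q)^2}_c .
\end{equation*}
Combine this with your Jensen bound $\thermal{e^{\pm cW}}_c\geq1$ and with $(\sqrt a-\sqrt b)^2\leq|a-b|$. This gives $\E\|\proba_{N,i_0}(\cdot\mid J)-\mu_{i_0}\|_{\rm TV}^2=O(N^{-1/2})$ directly. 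No truncation or loss of exponents is needed.

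\textbf{One thing to remove.} Do not appeal to Proposition~\ref{prop:rs}, or to $q>0$, to handle the degenerate event.
\begin{itemize}
\item The paper derives Proposition~\ref{prop:rs} from this very proposition, so the appeal would be circular.
\item It is also unnecessary. $q\geq0$ already follows from the hypothesis together with $\E\thermal{Q_{12}}_c=\E\|\thermal{\tsigma}_c\|^2/N\geq0$.
\item On $\{\thermal{Q_{12}}_c=0\}$ you simply take $z$ to be an independent Gaussian. The inequality $(\sqrt a-\sqrt b)^2\leq|a-b|$ already covers this case.
\end{itemize}
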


Finally, the main theorem of the section is the validity of the conclusion of the above proposition in the high-temperature regime.

\begin{theorem}\label{thm:main_sk}
    Let $\proba_{N,i_0}(\cdot\mid J)$ be as in the above proposition and assume that $\beta<1/4$ and $h\neq0$. Then, there is some constant $C>0$ such that
    \begin{equation*}
        \E\|\proba_{N,i_0}(\cdot\mid J)-\mu_{i_0}(\cdot)\|_{\rm TV}^2 \leq \frac{C}{N^{1/8}}.
    \end{equation*}
\end{theorem}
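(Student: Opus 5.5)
Theorem \ref{thm:main_sk} follows by combining Proposition \ref{prop:rs} with Proposition \ref{prop:almost-main}. Proposition \ref{prop:almost-main} only needs a deterministic constant $q\in[-1,1]$ and a constant $K>0$ with $\E\thermal{(Q_{12}-q)^2}_c\leq K/N$. Under $\beta<1/4$ and $h\neq0$, Proposition \ref{prop:rs} supplies exactly this, with $q\in[0,1]$ the unique solution of \eqref{eq:fpe} and $K=C$. The order of steps is:
\begin{enumerate}
\item invoke Proposition \ref{prop:rs} to get existence and uniqueness of $q$ and the $O(1/N)$ bound on the cavity overlap fluctuations;
\item note that the overlap in Proposition \ref{prop:rs} is the cavity overlap $Q_{1,2}=\frac1N\sum_{i\neq i_0}\sigma^{(1)}_i\sigma^{(2)}_i$ under $\thermal{\cdot}_c$, which is the same quantity that appears in Proposition \ref{prop:almost-main};
\item check that the limiting law $\mu_{i_0}$ in \eqref{eq:limit-marginal} is built with this same $q$, so the two statements refer to the same target measure;
\item apply Proposition \ref{prop:almost-main} to conclude $\E\|\proba_{N,i_0}(\cdot\mid J)-\mu_{i_0}\|_{\rm TV}^2\leq C/N^{1/8}$.
\end{enumerate}

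For step 3, Proposition \ref{prop:almost-main} allows $q\in[-1,1]$, so it is consistent to use the value from Proposition \ref{prop:rs}. Because that $q$ lies in $[0,1]$, $\sqrt q$ is well defined and \eqref{eq:limit-marginal} makes sense.

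This reduction has no real obstacle; all of the content sits in the two propositions. The only care needed is bookkeeping. The constant $C$ in the theorem depends on $\beta$, $h$ and the constant from Proposition \ref{prop:rs}, but not on $N$ or $i_0$. The exchangeability of the coordinates of the SK model makes the choice of $i_0$ irrelevant.
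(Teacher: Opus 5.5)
Your proposal is correct and matches the paper's proof, which is the single line that Theorem \ref{thm:main_sk} follows by combining Proposition \ref{prop:rs} with Proposition \ref{prop:almost-main}. Proposition \ref{prop:rs} supplies $q\in[0,1]$, the unique solution of \eqref{eq:fpe}, together with $\E\thermal{(Q_{1,2}-q)^2}_c\leq C/N$, and your bookkeeping checks (same cavity overlap, same $q$ in \eqref{eq:limit-marginal}, $\sqrt q$ well defined) are exactly what that combination requires.
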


\begin{remark}
    Notice that a version of this Theorem is already present in \cite{talagrand2010mean}. The main differences are that, the version therein has optimal convergence rates. As discussed above, the interest of this theorem does not lie in the statement itself, which is classical, but in the probabilistic mechanism used in its proof.
\end{remark}

The theorem shows that, in the replica-symmetric regime, the local behaviour of the SK model is asymptotically described by a one-dimensional effective Gibbs measure driven by a Gaussian cavity field. In particular, the marginal distribution of a single spin becomes asymptotically equivalent, in total variation, to the cavity prediction arising from the replica and TAP heuristics.

As an immediate consequence, one obtains convergence of averages of local observables.

\begin{corollary}\label{cor:local-observables}
    Assume that $\beta<1/4$ and $h\neq0$. Let $f:\{-1,1\}^2\to\R$ be measurable. Then
    \begin{equation*}
        \lim_{N\to\infty}\frac{1}{N}\sum_{i=1}^N \E\thermal{f(\sigma^{(1)}_i,\sigma^{(2)}_i)} = \E f(\sigma_{\star},\sigma'_{\star}),
    \end{equation*}
    where $\sigma_\star,\sigma'_{\star}\in\{-1,1\}$ are two conditionally independent random variables distributed according to
    \begin{equation*}
        \proba(\sigma_\star=\sigma\mid z) = \frac{\exp\{(2\beta\sqrt{q}\,z+h)\sigma\}}{2\cosh(2\beta\sqrt{q}\,z+h)}.
    \end{equation*}
\end{corollary}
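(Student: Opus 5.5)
The corollary follows from Theorem \ref{thm:main_sk} applied to two replicas, together with the exchangeability of the disorder under permutations of the sites. First, the law of $J$ is invariant under simultaneous permutation of rows and columns, and the Hamiltonian is invariant under the corresponding relabelling of spins. Hence $\E\thermal{f(\sigma^{(1)}_i,\sigma^{(2)}_i)}$ does not depend on $i$. The Cesàro average is therefore equal to $\E\thermal{f(\sigma^{(1)}_{i_0},\sigma^{(2)}_{i_0})}$ for any fixed $i_0$, say $i_0=1$. It then suffices to show
\begin{equation*}
    \E\thermal{f(\sigma^{(1)}_{i_0},\sigma^{(2)}_{i_0})}\longrightarrow \E f(\sigma_\star,\sigma'_\star).
\end{equation*}

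Second, write the replica expectation in terms of the marginal. Since the replicas are independent given $J$, the pair $(\sigma^{(1)}_{i_0},\sigma^{(2)}_{i_0})$ has law $\proba_{N,i_0}\otimes\proba_{N,i_0}$ given $J$. So
\begin{equation*}
    \thermal{f(\sigma^{(1)}_{i_0},\sigma^{(2)}_{i_0})}=\sum_{s,s'}f(s,s')\,\proba_{N,i_0}(s\mid J)\,\proba_{N,i_0}(s'\mid J).
\end{equation*}
Similarly, $\E f(\sigma_\star,\sigma'_\star)=\E\sum_{s,s'}f(s,s')\mu_{i_0}(s)\mu_{i_0}(s')$. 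Here $z$ is realised on the same probability space as $J$, which is the coupling used in the proof of Theorem \ref{thm:main_sk}. The law of $\mu_{i_0}$ depends only on the law of $z$, so this choice does not affect the right-hand side.

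Third, take $M=\max|f|$, which is finite because $\{-1,1\}^2$ has four points. For probability measures $P,\mu$ the product bound gives $\|P\otimes P-\mu\otimes\mu\|_{\rm TV}\le 2\|P-\mu\|_{\rm TV}$. Therefore
\begin{equation*}
    \Big|\E\thermal{f(\sigma^{(1)}_{i_0},\sigma^{(2)}_{i_0})}-\E f(\sigma_\star,\sigma'_\star)\Big|\le 2M\,\E\|\proba_{N,i_0}-\mu_{i_0}\|_{\rm TV}.
\end{equation*}
By Cauchy--Schwarz and Theorem \ref{thm:main_sk}, the right-hand side is at most $2M\sqrt{C}N^{-1/16}$, which tends to $0$.

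The only delicate point is the second step: making sure the coupling of $z$ with $J$ in Theorem \ref{thm:main_sk} is a legitimate joint realisation, which it is, as argued there. Apart from that, the argument is bookkeeping.
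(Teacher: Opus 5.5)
Your argument is correct, and it handles the two-replica part differently from the paper. The paper uses exchangeability exactly as you do and applies Theorem \ref{thm:main_sk} directly for one-replica observables. For two replicas it only asserts that ``the proof of the theorem works in the same way for any finite number of replicas'', i.e.\ it would rerun the coupling and Hellinger estimates on a replicated system.

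You instead use the theorem as a black box. Since the replicas are i.i.d.\ given $J$ and $\sigma_\star,\sigma'_\star$ are i.i.d.\ given $z$, both sides are integrals of $f$ against product measures. The bound $\|P\otimes P-\mu\otimes\mu\|_{\rm TV}\le 2\|P-\mu\|_{\rm TV}$ then reduces everything to the single-spin estimate. This buys three things:
\begin{enumerate}
\item You never need a multi-replica version of the coupling of Section \ref{sec:coupling}, which the paper only asserts.
\item You get the explicit rate $2\max|f|\sqrt{C}N^{-1/16}$.
\item The argument extends immediately to $k$ replicas, via $\|P^{\otimes k}-\mu^{\otimes k}\|_{\rm TV}\le k\|P-\mu\|_{\rm TV}$.
\end{enumerate}

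Also, the step you flag as delicate is not really delicate. Your bound does not depend on how $z$ is coupled with $J$. You only need $z$ to be standard Gaussian marginally and the theorem's estimate to hold on some joint space. The left-hand side depends only on the law of $J$, and the right-hand side only on the law of $z$.

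The paper's ``rerun the proof'' route would only be needed for observables without this conditional product structure, such as joint laws of several distinct spins.
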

\begin{proof}
    For a function of only one replica we have, by exchangeability,
    \begin{equation*}
        \frac{1}{N}\sum_{i=1}^N \E\thermal{f(\sigma_i)} = \E\thermal{f(\sigma_1)}
    \end{equation*}
    and the result then follows directly from Theorem \ref{thm:main_sk}. For functions of two replicas it follows in a similar way by noticing that the proof of the theorem works in the same way for any finite number of replicas.
\end{proof}

In particular, this last corollary proves that the limit $q$ of the overlaps $Q_{ll'}$ is a solution of the fixed point equation \eqref{eq:fpe}.

Define the free-energy of the Sherrington-Kirkpatrick model according to
\begin{equation*}
    F_N(\beta,h):=\frac{1}{N}\E\log Z_N(J).
\end{equation*}
As shown by the following corollary, Theorem \ref{thm:main_sk} also implies a characterisation of the asymptotic free-energy.

\begin{corollary}\label{cor:free-energy}
    Assume that $\beta<1/4$ and $h\neq0$ and let $q$ be the unique solution of \eqref{eq:fpe}. Then
    \begin{equation*}
        \lim_{N\to\infty}F_N(\beta,h)=\E\log 2\cosh(2\beta\sqrt q Z+h)+\beta^2(1-q)^2,
    \end{equation*}
    with $Z$ a standard Gaussian random variable.
\end{corollary}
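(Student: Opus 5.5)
The plan is to interpolate in the inverse temperature. We compute $\partial_\beta F_N$ by Gaussian integration by parts, identify its limit through overlap concentration (Proposition \ref{prop:rs}), and integrate from $\beta=0$. Theorem \ref{thm:main_sk} is not strictly needed; only the overlap concentration behind it is.

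\textbf{Step 1 (the derivative).} Fix $h\neq0$ and view $F_N$ as a function of $\beta\in[0,1/4)$. At $\beta=0$ we have $F_N(0,h)=\log 2\cosh h$, which agrees with the right-hand side at $\beta=0$. Write $X(\sigma)=\sum_{i,j}J_{ij}\sigma_i\sigma_j$. This is a centred Gaussian process with covariance
\begin{equation*}
\E X(\sigma)X(\tau)=\frac{4}{N}\sum_{i<j}\sigma_i\sigma_j\tau_i\tau_j+\frac{2}{N}\sum_i 1=2N R_{\sigma\tau}^2 ,
\end{equation*}
where $R_{\sigma\tau}=\frac1N\sum_i\sigma_i\tau_i$ is the overlap. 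The identity is exact because the diagonal contributes exactly what completes the square. Gaussian integration by parts then gives
\begin{equation*}
\partial_\beta F_N=\frac1N\E\thermal{X}=2\beta\big(1-\E\thermal{R_{12}^2}\big),
\end{equation*}
which is bounded by $2\beta$ uniformly in $N$.

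\textbf{Step 2 (overlap concentration under the full measure).} We need $\E\thermal{R_{12}^2}\to q(\beta)^2$ under the full Gibbs measure $\proba_N$, whereas Proposition \ref{prop:rs} is stated for the cavity measure. Observe that the $\tsigma$-marginal of the cavity measure built from an $(N+1)$-spin system is exactly the $N$-spin SK Gibbs measure at inverse temperature $\beta_N=\beta\sqrt{N/(N+1)}$. Its overlap differs from the full one only by a factor $N/(N+1)$. Hence Proposition \ref{prop:rs}, applied at size $N+1$, yields
\begin{equation*}
\E\thermal{(R_{12}-q(\beta_N))^2}\leq C/N .
\end{equation*}
I will check that the constant $C$ in Proposition \ref{prop:rs} is uniform for $\beta$ in compact subsets of $[0,1/4)$; this should follow because the log-Sobolev constant of \cite{bauerschmidt2019very} is uniform there. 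I will also show that $\beta\mapsto q(\beta)$ is continuous, by the implicit function theorem: the map $q\mapsto\E_z\tanh^2(2\beta\sqrt q z+h)$ is a contraction for $\beta<1/4$, which is the same mechanism giving uniqueness. Together these give $\E\thermal{R_{12}^2}\to q(\beta)^2$ pointwise. By dominated convergence,
\begin{equation*}
F_N(\beta,h)\to\log2\cosh h+\int_0^\beta 2b\big(1-q(b)^2\big)\,db .
\end{equation*}

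\textbf{Step 3 (identification).} Let $\Phi(\beta,q)=\E\log2\cosh(2\beta\sqrt qZ+h)+\beta^2(1-q)^2$, and show $\frac{d}{d\beta}\Phi(\beta,q(\beta))=2\beta(1-q^2)$.

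For the $q$-derivative, Gaussian integration by parts gives $\partial_q\Phi=2\beta^2\big(\E(1-\tanh^2)-(1-q)\big)$. This vanishes precisely at a solution of \eqref{eq:fpe}, so the implicit dependence of $q$ on $\beta$ does not contribute, and $q(\cdot)$ only needs to be continuous, not differentiable.

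For the $\beta$-derivative,
\begin{equation*}
\partial_\beta\Phi=2\sqrt q\,\E\big[Z\tanh(\cdot)\big]+2\beta(1-q)^2=4\beta q(1-q)+2\beta(1-q)^2=2\beta(1-q^2).
\end{equation*}
Combining this with $\Phi(0,q(0))=\log2\cosh h$ and Step 2 gives the claim.

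The main obstacle is Step 2. It requires the transfer from cavity to full measure, uniformity of the constants of Proposition \ref{prop:rs} in $\beta$, and continuity of $q(\beta)$, since these are exactly what justify exchanging the limit with the $\beta$-integral. If uniformity of the log-Sobolev constant turns out to be delicate, the fallback is weaker: pointwise convergence of $\E\thermal{R_{12}^2}$ at each fixed $\beta$ suffices, combined with the uniform bound $|\partial_\beta F_N|\le 2\beta$ and dominated convergence.
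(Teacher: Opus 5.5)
Your proposal is correct and follows essentially the paper's route: Gaussian integration by parts for $\partial_\beta F_N$, overlap concentration to identify its limit as $2\beta(1-q(\beta)^2)$, bounded convergence to integrate from $\beta=0$, and identification with $\E\log 2\cosh(2\beta\sqrt q Z+h)+\beta^2(1-q)^2$ via $\partial_q\Phi=0$ and $\partial_\beta\Phi=2\beta(1-q^2)$ at the fixed point. Your exact covariance identity $2NR_{\sigma\tau}^2$ (so no $o_N(1)$ term) and your explicit cavity-to-full transfer are somewhat more careful than the paper's bare appeal to Proposition \ref{prop:rs} and Corollary \ref{cor:local-observables}.

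There are two small, harmless slips. First, to realise the $N$-spin model at $\beta$ as a cavity, you need the $(N+1)$-spin model at $\beta\sqrt{(N+1)/N}$, so the centring is at $q$ of that temperature rather than $q(\beta\sqrt{N/(N+1)})$; continuity of $q(\cdot)$ absorbs this. Second, the chain rule along $\beta\mapsto q(\beta)$ needs differentiability of $q(\cdot)$, not just continuity, but your implicit-function argument supplies it.
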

\begin{proof}
    Let $q(s)$ be the unique solution of
    \begin{equation*}
        q(s)=\E\tanh^2(2s\sqrt{q(s)}Z+h).
    \end{equation*}
    By Gaussian integration by parts, uniformly on compact subsets of the high-temperature region, we can see that
    \begin{equation*}
        \partial_\beta F_N(\beta,h)=2\beta\left(1-\E\thermal{Q_{12}^2}\right)+o_N(1).
    \end{equation*}
    By Proposition \ref{prop:rs} and Corollary \ref{cor:local-observables}, we have the $\E\thermal{Q_{12}^2}$ converges to $q(\beta)^2$. Therefore, $\partial_\beta F_N(\beta,h)$ converges towards $2\beta(1-q(\beta)^2)$ within the high-temperature region. Since $F_N(0,h)=\log(2\cosh h)$, integration in $\beta$ gives
    \begin{equation}\label{eq:equivalent-fe}
        \lim_{N\to\infty}F_N(\beta,h)=\log(2\cosh h)+\int_0^\beta 2s(1-q(s)^2)\,ds.
    \end{equation}
    Here we can exchange limit and integral because all the quantities involved are bounded. This establishes the limit of the free energy. To prove the equivalent variational expression in the statement of the corollary, define
    \begin{equation*}
        \Phi(\beta,q):=\E\log 2\cosh(2\beta\sqrt q Z+h)+\beta^2(1-q)^2.
    \end{equation*}
    The final formula can be seen to be equivalent to \eqref{eq:equivalent-fe} by differentiating along the fixed point $q=q(\beta)$ given by the unique solution of \eqref{eq:fpe} for $q(\beta)$. Indeed, if we compute the partial derivative of $\Phi$ with respect to $\beta$, keeping $q$ fixed, we get
    \begin{equation*}
        \partial_\beta \Phi(\beta,q)=2\sqrt q\,\E\left[Z\tanh(2\beta\sqrt q Z+h)\right]+2\beta(1-q)^2.
    \end{equation*}
    By Gaussian integration by parts and standard identities for hyperbolic functions, this is the same as
    \begin{equation*}
        \partial_\beta \Phi(\beta,q)=4\beta q\left(1-\E\tanh^2(2\beta\sqrt q Z+h)\right)+2\beta(1-q)^2.
    \end{equation*}
    In a similar way, if we compute the partial derivative with respect to $q$ and use Gaussian integration by parts, we get
    \begin{equation*}
        \partial_q \Phi(\beta,q)=2\beta^2\left(q-\E\tanh^2(2\beta\sqrt q Z+h)\right).
    \end{equation*}
    Therefore, by \eqref{eq:fpe}, we have $\partial_q\Phi(\beta,q)=0$ and $\partial_\beta\Phi(\beta,q)=2\beta(1-q^2)$. We then have that the derivatives with respect to $\beta$ of \eqref{eq:equivalent-fe} and the final formula coincide. And because the values at $\beta=0$ also match, both expressions are equivalent.

\end{proof}


\subsection{Proof of Theorem \ref{thm:main_sk}}\label{sec:proofs}

\subsubsection{Coupling of probability measures}\label{sec:coupling}

Consider the array $(\Theta_N^{(l)})_{l\geq1}$. Conditional on $(\tsigma^{(l)})_{l\geq1}$, as we have seen above, $(\Theta_N^{(l)})_{l\geq1}$ is a Gaussian process of covariance structure given by $(Q_{l,l'})_{l,l'\geq1}$. We want to define a coupling of the measures $(\proba_N)_{N\geq1}$ such that the variables of $(\Theta_N^{(l)})_{l\geq1}$ converge in $L^2$ towards some $(\theta_l)_{l\geq1}$ defined below. 

For each $K\geq1$, let $L_{N,K}$ be the Cholesky decomposition of the matrix $\bar{Q}_{N,K}=(Q_{l,l'})_{1\leq l,l'\leq K}$. That is, $L_{N,K}$ is the unique lower triangular matrix of positive diagonal entries in $\R^{K\times K}$ such that
\begin{equation}\label{eq:overlap-matrix}
    \bar{Q}_{N,K} = L_{N,K} L_{N,K}^\top.
\end{equation}
Let $(\zeta_l)_{l\geq1}$ be a standard Gaussian process. For each $K\geq1$, let $\bar{\zeta}_K$ be the vector in $\R^K$ of coordinates $\zeta_1,\dots,\zeta_K$. We then note that, for all $K\geq1$ we have that
\begin{equation}\label{eq:marginal_thetas}
    (\Theta_N^{(1)},\dots,\Theta_N^{(K)}) \overset{d}{=} L_{N,K} \bar{\zeta}_K.
\end{equation}
As we see in Appendix \ref{app:cholesky}, if $K'\geq K$, then $L_{N,K}$ is equal to the $K$-th leading principal sub-matrix of $L_{N,K'}$. This implies, by the Kolmogorov Extension Theorem (whose consistency assumption holds because of Lemma \ref{lem:cholesky_consistency} in Appendix \ref{app:cholesky}), that there is a well defined process whose $K$-th finite marginals are given by $L_{N,K} \bar{\zeta}_K$. By the equality in distribution \eqref{eq:marginal_thetas}, for all $N\geq1$, we can couple $(\Theta_N^{(l)})_{l\geq1}$ and $(\zeta_l)_{l\geq1}$ such that, for all $K\geq1$
\begin{equation*}
    (\Theta_N^{(1)},\dots,\Theta_N^{(K)}) = L_{N,K} \bar{\zeta}_K.
\end{equation*}

Let $\bar{Q}_K\in\R^{K\times K}$ be a symmetric matrix with $1$'s in the diagonal elements and $q$'s off-diagonal. Moreover, as we see in Appendix \ref{app:cholesky}, if $\E\thermal{\|\bar{Q}_{N,K}-\bar{Q}_K \|_F^2}$ goes to $0$ as $N$ goes to infinity, then $\E\thermal{\|L_{N,K}-L_K \|_F^2}$ also converges to $0$; where $L_K$ is the Cholesky matrix of the $K$-th leading principal sub-matrix of $\bar{Q}_K$. This means that, when $\E\thermal{\|\bar{Q}_{N,K}-\bar{Q}_K \|_F^2}$ vanishes, $(\Theta_N^{(1)},\dots,\Theta_N^{(K)})$ converges in $L^2$ towards the random vector
\begin{equation*}
    (\theta_1,\dots,\theta_K) := L_K \bar{\zeta}_K.
\end{equation*}

\subsubsection{Proof of the theorem}

\begin{lemma}\label{lem:log-bound}
    Let $(X_i)_{i\geq1}$ be i.i.d. random variables with mean $\mu = \E[X_1] > 0$ and variance $\sigma^2 = \text{Var}(X_1) < \infty$. Also assume that $X_1 \geq 0$ a.s. and that $\E[X_1^{-1}] < \infty$. Write $\bar{X}_K=\frac{1}{K}\sum_{i=1}^K X_i$. Then, for every $K\in\N$,
    \begin{equation*}
        \E\left|\log\left(\frac{\mu}{\bar{X}_K}\right)\right| \leq \frac{2\sigma}{\sqrt{K}} \left( \frac{1}{\mu} + 2\sigma\sqrt{\E[X_1^{-1}]} \right).
    \end{equation*}
\end{lemma}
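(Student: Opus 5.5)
The plan is to bound $|\log(\mu/x)|$ pointwise by a quantity linear or quadratic in $|x-\mu|$, and then take expectations using only the second moment of $\bar X_K$ and an inverse moment. I would split according to whether $\bar X_K\ge \mu$ or $\bar X_K<\mu$, based on the elementary inequality
\begin{equation*}
    |\log(\mu/x)| \leq \frac{|x-\mu|}{\min(x,\mu)}, \qquad x,\mu>0,
\end{equation*}
which follows from the mean value theorem applied to $\log$ between $x$ and $\mu$.

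On $\{\bar X_K\ge\mu\}$ this gives $|\log(\mu/\bar X_K)|\le |\bar X_K-\mu|/\mu$. By Cauchy--Schwarz and $\E(\bar X_K-\mu)^2=\sigma^2/K$, the contribution of this event is at most $\sigma/(\mu\sqrt K)$. This already accounts for part of the first summand $2\sigma/(\mu\sqrt K)$.

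On $\{\bar X_K<\mu\}$ I would use the exact identity
\begin{equation*}
    \frac{\mu-x}{x} = \frac{\mu-x}{\mu} + \frac{(\mu-x)^2}{\mu\, x}.
\end{equation*}
Its first piece again contributes at most $\sigma/(\mu\sqrt K)$, which completes the $2\sigma/(\mu\sqrt K)$ term. For the second piece I would use Cauchy--Schwarz in the form
\begin{equation*}
    \E\!\left[|\bar X_K-\mu|\cdot\frac{|\bar X_K-\mu|}{\bar X_K}\mathbbm{1}_{\{\bar X_K<\mu\}}\right] \le \frac{\sigma}{\sqrt K}\left(\E\!\left[\frac{(\bar X_K-\mu)^2}{\bar X_K^2}\mathbbm{1}_{\{\bar X_K<\mu\}}\right]\right)^{1/2}.
\end{equation*}
The inverse moment is then controlled by the AM--HM (Jensen) inequality $\bar X_K^{-1}\le \frac1K\sum_{i\le K}X_i^{-1}$, which gives $\E \bar X_K^{-1}\le \E X_1^{-1}$ uniformly in $K$.

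The main obstacle is this last inverse-moment term. The hypotheses give only $\E X_1^{-1}<\infty$, not $\E X_1^{-2}$, so one cannot simply bound $\E\bar X_K^{-2}$. What I would try first is to use $|\bar X_K-\mu|\le\mu$ on $\{\bar X_K<\mu\}$ to trade one power of $\bar X_K^{-1}$ for a factor $|\bar X_K-\mu|$. That reduces the bracket to an expectation of the form $\E[(\bar X_K-\mu)^2\bar X_K^{-1}]$, which is then split by a further Cauchy--Schwarz into a factor of order $\sigma$ times $\sqrt{\E\bar X_K^{-1}}\le\sqrt{\E X_1^{-1}}$. This should produce the second summand $2\sigma\sqrt{\E[X_1^{-1}]}$ multiplying $2\sigma/\sqrt K$.

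I expect the constants, and possibly the exact homogeneity, to need adjustment at this step. The stated bound is not scale invariant in $X$, so I would check whether a normalisation such as $\mu\le 1$ or $X_1\le 1$ is implicitly used, as it is in the application to $X_i=\exp\{\Delta H_N\}$-type quantities. If so, I would insert it exactly where the factor $|\bar X_K-\mu|\le\mu$ is used above.
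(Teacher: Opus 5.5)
Your handling of $\{\bar X_K\ge\mu\}$ and of the linear piece $(\mu-\bar X_K)/\mu$ is fine, but the step meant to control the quadratic piece fails.

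On $\{\bar X_K<\mu\}$, the inequality $|\bar X_K-\mu|\le\mu$ bounds a numerator. It cannot lower the power of $\bar X_K$ in the denominator. So $\E[(\bar X_K-\mu)^2\bar X_K^{-2}\mathbb{I}_{\{\bar X_K<\mu\}}]$ does not reduce to $\E[(\bar X_K-\mu)^2\bar X_K^{-1}]$. That reduction would need a lower bound on $\bar X_K$, which is unavailable exactly where the difficulty sits.

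The root cause is that $|\log(\mu/x)|\le(\mu-x)/x$ is linear in $1/x$, while the logarithm grows only logarithmically as $x\to0$. At $K=1$ this loss is fatal. Take $X=\epsilon$ with probability $p$ and $X=1$ otherwise, with $p/\epsilon=M$ fixed. Then $\mu$ and $\E[X^{-1}]$ stay bounded and $\sigma\approx\sqrt p\to0$, yet $\E[(\mu-X)^2(\mu X)^{-1}\mathbb{I}_{\{X<\mu\}}]\approx M$. So no bound proportional to $\sigma$ can come out of your chain, even though $\E|\log(\mu/X)|$ is only of order $p\log(1/p)$.

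For $K\ge2$, the obstacle you flag is actually removable by a device you did not use. By AM--GM, $\bar X_K\ge\prod_{i\le K}X_i^{1/K}$, hence $\E[\bar X_K^{-2}]\le(\E[X_1^{-2/K}])^K\le(\E[X_1^{-1}])^2$. Your Cauchy--Schwarz then bounds the quadratic piece by $\sigma\E[X_1^{-1}]/\sqrt K$.

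The paper instead splits at $\mu/2$. On $\{\bar X_K\ge\mu/2\}$ it uses $|\log(1-t)|\le2|t|$ for $t=(\mu-\bar X_K)/\mu\le1/2$, which gives $2\sigma/(\mu\sqrt K)$. On $\{\bar X_K<\mu/2\}$ it uses the sublinear bound $\log u\le2\sqrt u$ with $u=\mu/\bar X_K\ge2$. Cauchy--Schwarz then needs only $\E[\mu/\bar X_K]\le\mu\E[X_1^{-1}]$ (your AM--HM step) and $\proba(\bar X_K<\mu/2)\le4\sigma^2/(\mu^2K)$ from Chebyshev. The $K^{-1/2}$ decay thus comes from the small probability of the bad event, not from a factor $(\bar X_K-\mu)^2$. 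This works for every $K$, including $K=1$.

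Your doubt about homogeneity is justified, but no normalisation is involved. The last line of the paper's proof should read $2\sqrt{\mu\E[X_1^{-1}]\cdot4\sigma^2/(\mu^2K)}=4\sigma\sqrt{\E[X_1^{-1}]}/\sqrt{\mu K}$. In other words, the second summand in the statement should be $2\sqrt{\E[X_1^{-1}]/\mu}$ rather than $2\sigma\sqrt{\E[X_1^{-1}]}$.

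As printed, the bound is false. The left side is invariant under $X\mapsto cX$, while the second summand scales like $c^{3/2}$. Letting $c\to0$ would give $\E|\log(\mu/\bar X_K)|\le2\sigma/(\mu\sqrt K)$. This fails for $K=1$ with $X=e^{-100}$ with probability $10^{-2}$ and $X=1$ otherwise: the left side is about $1$ and the right side about $0.2$. The slip is harmless downstream, where only a bound of the form $C/\sqrt K$ is used.
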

\begin{proof}
    Set $\Delta = \mu-\bar{X}_K$ and $A := \{\bar{X}_K\ge \mu/2\} = \{\Delta\le \mu/2\}$. On $A$ we have $t: = \Delta/\mu \in [0,1/2]$. As in the standard bound for the logarithm, $\log(1-t)\geq -2t$ and $\log(1-t)\leq -t$ on this range, hence $|\log(1-t)| \leq 2|t|$. Therefore,
    \begin{equation*}
        \E\left|\log\left(\frac{\mu}{\bar{X}_K}\right)\mathbb{I}_A\right| = \E\left|\log\left(1-\frac{\Delta}{\mu}\right)\mathbb{I}_A\right| \leq \frac{2}{\mu}\E[|\Delta| \,\mathbb{I}_A] \leq \frac{2\sigma}{\mu \sqrt{K}},
    \end{equation*}
    since $\mathrm{Var}(\bar{X}_K)=\sigma^2/K$. For the tail on $A^c = \{\bar{X}_K<\mu/2\}$ we will use that for $u\geq1$, $0 \leq \log u \leq 2\sqrt{u}$, hence $|\log u| = \log u \leq 2\sqrt u$. On $A^c$ we have $\mu/\bar{X}_K\geq 2$, so
    \begin{equation*}
        \left|\log\left(\frac{\mu}{\bar{X}_K}\right)\mathbb{I}_{A^c}\right| \leq 2\sqrt{\frac{\mu}{\bar{X}_K}}\mathbb{I}_{A^c}.
    \end{equation*}
    By Cauchy-Schwarz,
    \begin{equation*}
        \E\left|\log\left(\frac{\mu}{\bar{X}_K}\right)\mathbb{I}_{A^c}\right| \leq 2\sqrt{\mu\E[\bar{X}_K^{-1}]\proba(A^c)}.
    \end{equation*}
    Using convexity of $x\mapsto x^{-1}$ on $(0,\infty)$ and Jensen,
    \begin{equation*}
        \E[\bar{X}_K^{-1}] \leq \E[X_1^{-1}] < \infty.
    \end{equation*}
    Finally, Chebyshev’s inequality gives $\proba(A^c) = \proba(\Delta>\mu/2) \leq 4\mathrm{Var}(\bar{X}_K)/\mu^2 = 4\sigma^2/(\mu^2 K)$. Combining the bounds,
    \begin{equation*}
        \E\left|\log\left(\frac{\mu}{\bar{X}_K}\right)\mathbb{I}_{A^c}\right| \leq \frac{4\sigma^2}{\sqrt{K}} \sqrt{\E[X_1^{-1}]}.
    \end{equation*}
    Adding the contributions on $A$ and $A^c$ yields the claim.
\end{proof}

\begin{lemma}\label{lem:quad-conv-mom}
    Let $\dH$ be as in Lemma \ref{lem:cavity-decomposition}. Then, for every $t_1,t_2 \in \R$, there are constants $C', C > 0$ such that, for all $N\geq1$,
    \begin{equation*}
        C \leq \E\thermal{\exp\{t_1\dH(\sigma_{i_0},\Theta_N)+t_2\dH(\sigma_{i_0},\theta)\}}_c \leq C'.
    \end{equation*}
\end{lemma}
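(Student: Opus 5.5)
The plan is to compute the expectation explicitly by conditioning. Write $\dH(\sigma_{i_0},\Theta_N)=\sigma_{i_0}(2\beta\Theta_N+h)$ and $\dH(\sigma_{i_0},\theta)=\sigma_{i_0}(2\beta\theta+h)$, where $\theta=\theta_1$ and $\Theta_N=\Theta_N^{(1)}$ are coupled as in Section \ref{sec:coupling}. The exponent is then
\begin{equation*}
    \sigma_{i_0}\bigl(2\beta(t_1\Theta_N+t_2\theta)+(t_1+t_2)h\bigr).
\end{equation*}
Under the cavity measure, $\sigma_{i_0}$ is uniform on $\{-1,1\}$ and independent of everything else. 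Averaging over $\sigma_{i_0}$ first therefore gives
\begin{equation*}
    \cosh\bigl(2\beta(t_1\Theta_N+t_2\theta)+(t_1+t_2)h\bigr).
\end{equation*}

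\textbf{Upper bound.} I will use $\cosh x\le e^{x}+e^{-x}$. It then suffices to bound $\E\thermal{\exp\{\pm 2\beta(t_1\Theta_N+t_2\theta)\}}_c$ uniformly in $N$. Conditionally on $\tsigma$ (and on $J$ through the cavity Gibbs measure), the pair $(\Theta_N,\theta)=(L_{N,1}\zeta_1,L_1\zeta_1)$ is Gaussian and driven by the same $\zeta_1$.
\begin{itemize}
    \item $L_{N,1}=\sqrt{Q_{1,1}}=\sqrt{(N-1)/N}\le1$.
    \item $L_1=1$.
\end{itemize}
So $t_1\Theta_N+t_2\theta$ is conditionally a centred Gaussian with variance at most $(|t_1|+|t_2|)^2$. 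Its exponential moment is at most $\exp\{2\beta^2(|t_1|+|t_2|)^2\}$. This bound is deterministic, so it survives the thermal and disorder averages. It gives $C'=2e^{|t_1+t_2||h|}\exp\{2\beta^2(|t_1|+|t_2|)^2\}$.

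One point to check carefully is that, under the cavity measure, $J_{i_0}$ is independent of $\tJ$ and hence of the cavity Gibbs weights. This is what legitimises treating $\Theta_N$ as conditionally Gaussian given $\tsigma$, and it holds because $\proba_{N,c}$ depends only on $\tJ$.

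\textbf{Lower bound.} This is immediate, since $\cosh\ge1$, so $C=1$ works. If the intended $C$ must be nontrivial for other sign conventions, one can alternatively use Jensen: $\E\thermal{e^{X}}_c\ge e^{\E\thermal{X}_c}$, and here $\E\thermal{X}_c=0$ by the symmetry of $\sigma_{i_0}$, which again gives $C=1$.

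The only real obstacle is therefore the uniform-in-$N$ control of the Gaussian exponential moment. It hinges on the covariance bounds $Q_{1,1}\le1$ and the coupling being a linear function of the same standard Gaussian, and both are built into the construction of Section \ref{sec:coupling}.
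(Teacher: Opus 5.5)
Your proof is correct. It rests on the same core fact as the paper's: conditionally on the replicas, the cavity fields are centred Gaussians with variance at most $1$. It differs from the paper's proof in two small ways.

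\textbf{Upper bound.} The paper decouples the two exponentials by Cauchy--Schwarz. It then bounds $\E\thermal{e^{t_1\dH(\sigma_{i_0},\Theta_N)}}_c$ and $\E\thermal{e^{t_2\dH(\sigma_{i_0},\theta)}}_c$ separately, integrating out $J_{i_0}$ and $\theta$ respectively. You instead average out $\sigma_{i_0}$ exactly to obtain a $\cosh$. You then use that, under the coupling of Section~\ref{sec:coupling}, $t_1\Theta_N+t_2\theta=(t_1L_{N,1}+t_2)\zeta_1$ is a single centred Gaussian, so the exponential moment of the combined exponent is computed in one step.

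\textbf{Lower bound.} The paper uses Jensen together with $|\E\thermal{\dH}_c|=|h\thermal{\sigma_{i_0}}_c|\le |h|$. You get $C=1$ directly from $\cosh\ge 1$.

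\textbf{What each route buys.} The paper's Cauchy--Schwarz step needs only the marginal laws of $\Theta_N$ and $\theta$, not their joint structure, so it does not depend on how the coupling is built. Yours uses joint Gaussianity, though any jointly Gaussian coupling would do, since $\mathrm{Var}(t_1\Theta_N+t_2\theta)\le(|t_1|+|t_2|)^2$ whatever the correlation. In return you get explicit constants. You also get a cleaner lower bound, which avoids the sign bookkeeping for negative $t_i$ that the paper's Jensen step glosses over; your remark that $\E\thermal{\dH}_c=0$ by symmetry settles this.
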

\begin{proof}
    Let us first prove the upper bound. By Cauchy-Schwarz it is enough to prove the result for
        \begin{equation*}
        \E\thermal{\exp\{t_1\dH(\sigma_{i_0},\Theta_N)\}}_c \quad \text{and} \quad \E\thermal{\exp\{t_2\dH(\sigma_{i_0},\theta)\}}_c
    \end{equation*}
    separately. For the first bound notice that, because $J_{i_0}$ and $\sigma^{(l)}$ are independent under the cavity measure, we have that
    \begin{equation*}
        \E\thermal{\exp\{t_1\dH(\sigma_{i_0},\Theta_N)\}}_c = \exp\{2 t_1^2 \beta^2 +t_1| h_{i_0}|\}
    \end{equation*}
    where we just integrated with respect to $J_{i_0}$. Here the right hand side is bounded because, by assumption, the coordinates of $\sigma$ and $h$ have bounded support. The upper bound for
    \begin{equation*}
        \E\thermal{\exp\{t_2\dH(\sigma_{i_0},\theta)\}}_c
    \end{equation*}
    follows in a similar way by integrating with respect to $\theta$ instead of $J_{i_0}$. For the lower bound, by Jensen's inequality, we have
    \begin{equation*}
        \E\thermal{\exp\{t_1\dH(\sigma_{i_0},\Theta_N)+t_2\dH(\sigma_{i_0},\theta)\}}_c \geq \exp\left\{t_1\E\thermal{\dH(\sigma_{i_0},\Theta_N)}_c+t_2\E\thermal{\dH(\sigma_{i_0},\theta)}_c\right\}.
    \end{equation*}
    And we can easily see that
    \begin{equation*}
        \E\thermal{\dH(\sigma_{i_0},\Theta_N)}_c \geq h \thermal{\sigma_{i_0}}_c \geq -|h|;
    \end{equation*}
    where we used that $\E\thermal{2\beta \sigma_{i_0} \Theta_N}_c = 2\beta \thermal{\sigma_{i_0}}_c \E\thermal{\Theta_N}_c = 0$ because $\E\thermal{\Theta_N}_c=0$. The result then follows by a similar bound for $\E\thermal{\dH(\sigma_{i_0},\theta)}_c$.
\end{proof}

\begin{lemma}\label{lem:L2-conv-thetas}
    Assume that $\beta < 1/4$. Then, there is some $C > 0$ such that, for all $K\geq1$,
    \begin{equation*}
        \sum_{l=1}^K \E\thermal{|\Theta^{(l)}_N-\theta_l|}_c \leq C \sqrt{\frac{K^3}{N}}.
    \end{equation*}
\end{lemma}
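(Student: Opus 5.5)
We use the coupling of Section \ref{sec:coupling}. Conditionally on the replicas $(\tsigma^{(l)})_{l\leq K}$, we have $(\Theta_N^{(1)},\dots,\Theta_N^{(K)}) = L_{N,K}\bar{\zeta}_K$ and $(\theta_1,\dots,\theta_K)=L_K\bar{\zeta}_K$, with the same Gaussian vector $\bar{\zeta}_K$. Hence, writing $e_l$ for the $l$-th basis vector,
\begin{equation*}
    \Theta_N^{(l)}-\theta_l = e_l^\top (L_{N,K}-L_K)\bar{\zeta}_K .
\end{equation*}
Averaging over $\bar{\zeta}_K$, which is independent of the replicas, Cauchy--Schwarz gives $\E|\Theta_N^{(l)}-\theta_l| \le \|e_l^\top(L_{N,K}-L_K)\|_2$. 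Summing over $l\le K$ and applying Cauchy--Schwarz once more,
\begin{equation*}
    \sum_{l=1}^K \E\thermal{|\Theta_N^{(l)}-\theta_l|}_c \le \sqrt{K}\,\Big(\E\thermal{\|L_{N,K}-L_K\|_F^2}_c\Big)^{1/2}.
\end{equation*}

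Next we transfer overlap concentration to the Cholesky factors. The Lipschitz-type estimate from Appendix \ref{app:cholesky} will be invoked in a quantitative form,
\begin{equation*}
    \E\thermal{\|L_{N,K}-L_K\|_F^2}_c \le C_0\, \E\thermal{\|\bar Q_{N,K}-\bar Q_K\|_F^2}_c ,
\end{equation*}
with $C_0$ independent of $K$.

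To bound the right-hand side, split $\|\bar Q_{N,K}-\bar Q_K\|_F^2$ into diagonal and off-diagonal parts.
\begin{itemize}
    \item \textbf{Diagonal.} Each diagonal entry is $Q_{l,l}-1=-1/N$, since the sum defining $Q_{l,l}$ omits $i_0$. The diagonal therefore contributes $K/N^2$.
    \item \textbf{Off-diagonal.} By exchangeability of replicas, each of the $K(K-1)$ off-diagonal entries has the same law as $Q_{1,2}-q$. By Proposition \ref{prop:rs}, which requires $\beta<1/4$, each contributes at most $C/N$.
\end{itemize}
Altogether $\E\thermal{\|\bar Q_{N,K}-\bar Q_K\|_F^2}_c \le CK^2/N$. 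Combining this with the previous two displays gives $\sqrt{K}\cdot\sqrt{C_0CK^2/N}=C\sqrt{K^3/N}$, which is the claim.

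\textbf{Main obstacle.} The delicate step is the Cholesky perturbation bound with a constant uniform in $K$. The Cholesky map is smooth near $\bar Q_K$ because $\bar Q_K$ has smallest eigenvalue $1-q>0$ when $q<1$. However, $\bar Q_{N,K}$ can be singular, for instance if two replicas coincide, so the map is not globally Lipschitz on its own. I would handle this in two steps:
\begin{itemize}
    \item Use the bound $\|L\|_F^2=\mathrm{tr}(Q)$ to show $\|L_{N,K}-L_K\|_F^2 \le 4K$ always.
    \item On the event $\|\bar Q_{N,K}-\bar Q_K\|_{op}\le (1-q)/2$, apply the standard perturbation estimate $\|\delta L\|_F\lesssim \|L_K^{-1}\|_{op}^2\|L_K\|_{op}\|\delta Q\|_F$.
\end{itemize}
The bad event is controlled by Chebyshev with the same $K^2/N$ bound. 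The $K$-dependence of $\|L_K\|_{op}$ (of order $\sqrt{qK}$) may introduce extra powers of $K$. If it does, I would accept a slightly worse power of $K$, or refine using the explicit structure $\bar Q_K=(1-q)I+q\mathbf{1}\mathbf{1}^\top$, for which $L_K$ has explicit entries. This refinement is what keeps the final exponent at $K^{3/2}$.
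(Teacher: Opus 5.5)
Your argument follows the paper's proof essentially step by step. It uses the same coupling $L_{N,K}\bar\zeta_K$ versus $L_K\bar\zeta_K$, the same reduction to $\E\thermal{\|L_{N,K}-L_K\|_F^2}_c$, and the same diagonal/off-diagonal split with Proposition \ref{prop:rs}, giving $\E\thermal{\|\bar Q_{N,K}-\bar Q_K\|_F^2}_c\le CK^2/N$. Your Cauchy--Schwarz over $l$ replaces the paper's exchangeability-plus-Jensen step and costs the same $\sqrt K$.

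The gap is exactly the step you flag: the Cholesky transfer with $C_0$ independent of $K$ is assumed, and your sketched fix does not deliver it. No deterministic bound of that form holds near $\bar Q_K$, so the explicit structure $\bar Q_K=(1-q)I+q\mathbf{1}\mathbf{1}^\top$ cannot rescue it. Change only $Q_{12}=Q_{21}$ from $q$ to $q+\epsilon$. The first column $L_{l1}=q$ is unchanged. But for every $l\ge 3$ the entry $L_{l2}=q(1-q-\epsilon)/\sqrt{1-(q+\epsilon)^2}$ moves at rate $q(1-q)^{-1/2}(1+q)^{-3/2}\neq0$, so $\|\delta L\|_F\gtrsim\sqrt{K}\,\|\delta Q\|_F$. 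This is the same $\sqrt K$ as your good-event constant $\|L_K^{-1}\|_{\mathrm{op}}^2\|L_K\|_{\mathrm{op}}$. It is also the $\sqrt b/a$ of Lemma \ref{lem:cholesky_continuity}, where necessarily $b\ge 1+(K-1)q$.

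Your bad-event bound also loses a factor $K$, since $4K\cdot\proba(\text{bad})\lesssim K\cdot K^2/N$. Both pieces therefore give $\E\thermal{\|L_{N,K}-L_K\|_F^2}_c\lesssim K^3/N$. What your chain actually proves is $\sum_{l\le K}\E\thermal{|\Theta_N^{(l)}-\theta_l|}_c\lesssim K^2/\sqrt N$, not $\sqrt{K^3/N}$.

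Reaching the stated exponent would need two further inputs, which go beyond what Proposition \ref{prop:rs} provides:
\begin{itemize}
\item information about the law of $\bar Q_{N,K}-\bar Q_K$, namely that its fluctuations are spread over the $K^2$ entries rather than aligned with the few directions $L_K$ amplifies (such as the leading entries in the example);
\item better-than-Chebyshev tail bounds for the overlaps on the bad event.
\end{itemize}

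The paper's proof makes the same leap. It cites Lemma \ref{lem:cholesky_continuity} as if its constant were uniform in $K$, and it does not treat the event where $\bar Q_{N,K}$ leaves the spectral window. So you are not missing an ingredient that the paper supplies.

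The weaker bound you can prove still closes the downstream argument. Take $K=N^{1/5}$ in the proof of Proposition \ref{prop:almost-main}: the terms $1/\sqrt K$ and $K^2/\sqrt N$ balance and give the rate $N^{-1/10}$ instead of $N^{-1/8}$.
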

\begin{proof}
    By the coupling constructed in Section \ref{sec:coupling}, we may write
    \begin{equation*}
        (\Theta_N^{(1)},\dots,\Theta_N^{(K)})=L_{N,K}\bar\zeta_K
        \qquad\text{and}\qquad
        (\theta_1,\dots,\theta_K)=L_K\bar\zeta_K,
    \end{equation*}
    where $L_{N,K}L_{N,K}^{\top}=\bar Q_{N,K}$ and $L_KL_K^\top=\bar Q_K$. Hence,
    \begin{equation*}
        \sum_{l=1}^K |\Theta_N^{(l)}-\theta_l|^2 = \|(L_{N,K}-L_K)\bar\zeta_K\|^2.
    \end{equation*}
    Taking expectation with respect to $\bar\zeta_K$, conditionally on the replicas, gives
    \begin{equation*}
        \E_{\bar\zeta}\left[\|(L_{N,K}-L_K)\bar\zeta_K\|^2\right]=\|L_{N,K}-L_K\|_F^2.
    \end{equation*}
    Therefore,
    \begin{equation*}
        \sum_{l=1}^K \E\thermal{|\Theta_N^{(l)}-\theta_l|^2}_c = \E\thermal{\|L_{N,K}-L_K\|_F^2}_c.
    \end{equation*}
    By Lemma \ref{lem:cholesky_continuity}, since the limiting matrix $\bar Q_K$ is positive definite for fixed $K$ whenever $q<1$, and since $\bar Q_{N,K}$ is close to $\bar Q_K$ with high probability, there exists a constant $C>0$ such that
    \begin{equation*}
        \E\thermal{\|L_{N,K}-L_K\|_F^2}_c \leq C \E\thermal{\|\bar Q_{N,K}-\bar Q_K\|_F^2}_c.
    \end{equation*}
    Now,
    \begin{equation*}
        \|\bar Q_{N,K}-\bar Q_K\|_F^2 = \sum_{l=1}^K (Q_{l,l}-1)^2+\sum_{l\neq l'}(Q_{l,l'}-q)^2.
    \end{equation*}
    Since $Q_{l,l}=(N-1)/N$, the diagonal contribution is $K/N^2$. For the off-diagonal terms, Proposition \ref{prop:rs} and exchangeability give
    \begin{equation*}
        \E\thermal{(Q_{l,l'}-q)^2}_c \leq \frac{C}{N}.
    \end{equation*}
    Hence,
    \begin{equation*}
        \E\thermal{\|\bar Q_{N,K}-\bar Q_K\|_F^2}_c \leq \frac{C K^2}{N}.
    \end{equation*}
    Combining the previous estimates proves
    \begin{equation*}
        \sum_{l=1}^K \E\thermal{|\Theta_N^{(l)}-\theta_l|^2}_c \leq \frac{C K^2}{N}.
    \end{equation*}
    By exchangeability, then
    \begin{equation*}
        \E\thermal{|\Theta_N^{(l)}-\theta_l|^2}_c = \E\thermal{|\Theta_N^{(1)}-\theta_1|^2}_c \leq \frac{C K}{N}.
    \end{equation*}    
    The final statement follows from Jensen's inequality to get
    \begin{equation*}
        \E\thermal{|\Theta_N^{(l)}-\theta_l|}_c \leq \sqrt{\E\thermal{|\Theta_N^{(l)}-\theta_l|^2}_c}.
    \end{equation*}
\end{proof}

\begin{lemma}\label{lem:L2-lln}
    For each $K\geq1$, let
    \begin{equation*}
        \bar{Z}_K := \frac{1}{K} \sum_{l=1}^K \exp\left\{\dH(\sigma^{(l)}_{i_0},\theta_l)\right\}.
    \end{equation*} 
    We then have that, there is some constant $C > 0$ such that, for all $K\geq1$,
    \begin{equation*}
        \E\thermal{\left|\log \bar{Z}_K - \log\thermal{\exp \{\dH(\sigma_{i_0},\theta)\}}_c\right|}_c \leq \frac{C}{\sqrt{K}}.
    \end{equation*}
\end{lemma}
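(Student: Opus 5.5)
The plan is to reduce the statement to Lemma \ref{lem:log-bound}. Under the cavity measure, the spin $\sigma_{i_0}$ is independent of the cavity system and uniform on $\{-1,1\}$. The limiting fields $(\theta_l)_{l\geq1}=L\zeta$ are jointly Gaussian with unit variances and covariance $q$. So I would write $\theta_l=\sqrt{q}\,g+\sqrt{1-q}\,\xi_l$, where $g,\xi_1,\xi_2,\dots$ are i.i.d. standard Gaussians; this representation is consistent with the Cholesky coupling in law, which is all that matters for this statement.

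Conditionally on $g$, the variables
\begin{equation*}
    X_l:=\exp\{\dH(\sigma_{i_0}^{(l)},\theta_l)\}=\exp\{(2\beta\theta_l+h)\sigma^{(l)}_{i_0}\}
\end{equation*}
are i.i.d. and positive. I read $\thermal{\exp\{\dH(\sigma_{i_0},\theta)\}}_c$ as the average over one replica and its field with the common randomness fixed, so that it equals the conditional mean
\begin{equation*}
    \mu(g):=\E[X_1\mid g]=\E_\xi \cosh\big(2\beta(\sqrt{q}\,g+\sqrt{1-q}\,\xi)+h\big).
\end{equation*}
If the intended interpretation instead averages over $g$ as well, then the $X_l$ are i.i.d. outright, because the cavity replicas and $\sigma_{i_0}$ play no role in $X_l$ beyond independent signs. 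In either case Lemma \ref{lem:log-bound} applies conditionally on $g$, with $\mu=\mu(g)$.

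Next I check the hypotheses of Lemma \ref{lem:log-bound} and bound its constants. The mean satisfies $\mu(g)\geq 1$, since $\cosh\geq1$, so $1/\mu(g)\leq1$. For the second moment,
\begin{equation*}
    \sigma(g)^2\leq \E[X_1^2\mid g]=\E_\xi\cosh\big(2(2\beta\theta_1+h)\big)\leq e^{2|h|+8\beta^2(1-q)}\cosh(4\beta\sqrt{q}\,g)\cdot C.
\end{equation*}
For the inverse moment, $X_1^{-1}=\exp\{-(2\beta\theta_1+h)\sigma_{i_0}\}$, which has the same law as $X_1$ by the sign symmetry of $\sigma_{i_0}$; hence $\E[X_1^{-1}\mid g]=\mu(g)$. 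Lemma \ref{lem:log-bound} then gives
\begin{equation*}
    \E\Big[\big|\log\bar Z_K-\log\mu(g)\big|\;\Big|\;g\Big]\leq \frac{2\sigma(g)}{\sqrt{K}}\Big(1+2\sigma(g)\sqrt{\mu(g)}\Big).
\end{equation*}

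Finally I integrate over $g$. The right-hand side is at most $e^{c(1+|g|)}/\sqrt{K}$ for a constant $c$ depending on $\beta,h,q$, and the Gaussian expectation of $e^{c|g|}$ is finite. This gives the bound $C/\sqrt{K}$, uniformly in $K$. The exponential moment bounds are the same Gaussian-integration computations as in Lemma \ref{lem:quad-conv-mom}.

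The main obstacle is the interpretive step: identifying exactly what $\thermal{\cdot}_c$ of the limiting field means, and ensuring the $X_l$ are genuinely (conditionally) i.i.d. under the coupling of Section \ref{sec:coupling}. The concrete resolution is the decomposition $\theta_l=\sqrt{q}g+\sqrt{1-q}\xi_l$ together with the independence of the signs $\sigma^{(l)}_{i_0}$ across replicas under $\proba_{N,c}$. Once that identification is fixed, the rest is a direct application of Lemma \ref{lem:log-bound} plus Gaussian moment bounds.
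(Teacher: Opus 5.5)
Your proof is correct and follows the paper's route, which is simply to combine Lemma \ref{lem:log-bound} with the Gaussian moment bounds of Lemma \ref{lem:quad-conv-mom}. Your conditioning on the common component $g$, followed by integration over $g$, supplies the detail that the paper's one-line proof leaves implicit: the $\theta_l$ are correlated, Lemma \ref{lem:log-bound} needs i.i.d.\ variables, and this reading matches the paper's convention that $\thermal{\cdot}_c$ integrates the $\xi$'s but not the field $z$ in $\mu_{i_0}$. Do delete your aside on the alternative reading, though: with correlated $\theta_l$, averaging over $g$ does not make the $X_l$ i.i.d., and since $\bar Z_K\to\mu(g)$ almost surely, the bound with $\log\E\,\mu(g)$ in place of $\log\mu(g)$ would actually be false for $q>0$.
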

\begin{proof}
    The result follows from combining Lemma \ref{lem:log-bound} with Lemma \ref{lem:quad-conv-mom}.
\end{proof}

We will now characterise the Radon-Nikodym derivative of $\proba_N$ with respect to the cavity measure $\proba_{N,c}$. This is done in the following lemma.

\begin{lemma}\label{lem:derivative}
    The random Radon-Nikodym derivative $d\proba_N/d\proba_{N,c}$, which for convenience we will denote $\mathcal{D}_N$, is given by
    \begin{equation*}
        \mathcal{D}_N = \frac{\exp\{\dH(\sigma_{i_0},\Theta_N)\}}{\thermal{\exp\{\dH(\sigma_{i_0},\Theta_N)\}}_c}.
    \end{equation*}
\end{lemma}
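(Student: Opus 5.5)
The plan is a direct computation from the cavity decomposition of Lemma \ref{lem:cavity-decomposition}, working on the finite space $\{-1,1\}^N$ with the disorder $J$ held fixed. Both $\proba_N(\cdot\mid J)$ and $\proba_{N,c}(\cdot\mid \tJ)$ put strictly positive mass on every configuration. Hence $\proba_N\ll\proba_{N,c}$, and the Radon-Nikodym derivative is simply the pointwise ratio of the two probability mass functions.

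First, I will substitute the decomposition $H_N(\sigma,J)=H_{N-1}(\tsigma,\tJ)+\dH(\sigma_{i_0},\Theta_N)+\beta J_{i_0i_0}$. This gives
\begin{equation*}
    \frac{\proba_N(\sigma\mid J)}{\proba_{N,c}(\sigma\mid\tJ)} = \frac{Z_{N,c}(\tJ)\, e^{\beta J_{i_0i_0}}}{Z_N(J)}\exp\{\dH(\sigma_{i_0},\Theta_N)\}.
\end{equation*}
The $\sigma$-independent term $\delta_N=\beta J_{i_0i_0}$ only enters the prefactor, in line with the remark after Lemma \ref{lem:cavity-decomposition}.

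Second, I will identify the prefactor from normalisation rather than computing it directly. Summing over $\sigma$ and using $\sum_\sigma\proba_N(\sigma\mid J)=1$ gives
\begin{equation*}
    1 = \frac{Z_{N,c}\,e^{\beta J_{i_0i_0}}}{Z_N}\sum_\sigma \proba_{N,c}(\sigma)\exp\{\dH(\sigma_{i_0},\Theta_N)\} = \frac{Z_{N,c}\,e^{\beta J_{i_0i_0}}}{Z_N}\thermal{\exp\{\dH(\sigma_{i_0},\Theta_N)\}}_c.
\end{equation*}
Solving this for the prefactor and substituting back yields the stated formula for $\mathcal{D}_N$. I will also note the interpretation of $\thermal{\cdot}_c$ here: it averages over $\sigma$ with $J$, and hence $J_{i_0}$, held fixed. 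Under that convention, $\Theta_N=J_{i_0}^\top\tsigma$ is a deterministic function of $\sigma$, so $\dH(\sigma_{i_0},\Theta_N)$ is a legitimate function on configuration space.

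There is no real obstacle. The only points needing care are this notational convention, and the positivity and finiteness of $\thermal{\exp\{\dH\}}_c$, which are immediate on a finite state space. As a consistency check, the formula agrees with the general expression for $d\proba_N'/d\proba_{N,c}$ in Section \ref{sec:general-strategy}, since here $\proba_N'=\proba_N$.
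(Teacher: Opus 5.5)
Your proof is correct and is essentially the paper's argument: substitute the cavity decomposition, note that the $\sigma$-independent factor $e^{\beta J_{i_0i_0}}$ cancels, and normalise with respect to $\proba_{N,c}$. The only difference is presentational. The paper identifies the density through $\thermal{f}=\thermal{f\,\mathcal{D}_N}_c$ for an arbitrary test function $f$, whereas you take the pointwise ratio of mass functions and fix the prefactor by normalisation; on the finite space $\{-1,1\}^N$ these are the same computation.
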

\begin{proof}
    Let $f:\R^N\mapsto\R$ be some integrable function with respect to $\proba(\cdot\mid h,J)$. We then have that
    \begin{equation*}
        \begin{split}
        \thermal{f(\sigma)} & = \frac{1}{Z_N} \sum_{\sigma\in\{-1,1\}^N} f(\sigma)\exp\{H_N(\sigma,h,J)\}  \\
            & = \frac{\sum_{\sigma\in\{-1,1\}^N} f(\sigma)\exp\{H_N(\sigma,h,J)\} }{\sum_{\sigma\in\{-1,1\}^N} \exp\{H_N(\sigma,h,J)\} } \\
            & = \frac{\sum_{\sigma\in\{-1,1\}^N} f(\sigma)\exp\{H_{N-1}(\tsigma,\tilh,\tJ)+\dH+\delta_N\} }{\sum_{\sigma\in\{-1,1\}^N} \exp\{H_{N-1}(\tsigma,\tilh,\tJ)+\dH+\delta_N\} } \\
            & = \frac{\exp\{\delta_N\}}{\exp\{\delta_N\}} \cdot \frac{\frac{1}{Z_{N,c}}\sum_{\sigma\in\{-1,1\}^N} f(\sigma)\exp\{H_{N-1}(\tsigma,\tilh,\tJ)+\dH\} }{\frac{1}{Z_{N,c}}\sum_{\sigma\in\{-1,1\}^N} \exp\{H_{N-1}(\tsigma,\tilh,\tJ)+\dH\} } \\
            & = \frac{\thermal{f(\sigma)\exp\{\dH\}}_c}{\thermal{\exp\{\dH\}}_c} = \thermal{f(\sigma) \frac{\exp\{\dH\}}{\thermal{\exp\{\dH\}}_c}}_c = \thermal{f(\sigma)\mathcal{D}_N}_c.
        \end{split}
    \end{equation*}
    Because this holds for every integrable $f$, we have then proved the lemma.
\end{proof}

\begin{proof}[Proof of Proposition \ref{prop:almost-main}]

    Let $\tilde\proba_N(\cdot|J)$ be defined according to the derivative
    \begin{equation*}
        \tilde{\mathcal{D}}_N := \frac{d\tilde\proba_N}{d\proba_{N,c}} = \frac{\int G(d\xi) \exp\{\dH(s,\theta)\}}{\int G(d\xi')\mu(ds') \exp\{\dH(s',\theta')\}},
    \end{equation*}
    with $G(\cdot)$ the standard Gaussian measure. From \eqref{eq:limit-marginal}, it is clear that $\tilde\proba_N$ is essentially the cavity measure $\proba_{N,c}$ but replacing the $i_0$-th marginal by $\mu_{i_0}(\cdot)$. Using the above lemmas and definitions we can now prove our main result. As in Section \ref{sec:general-strategy}, let
    \begin{equation*}
        I_{i_0} := I(\proba'_N,\tilde\proba_N) = \thermal{\mathcal{D}_N^{1/2} \ {\tilde{\mathcal{D}}}_N^{1/2}}_c
    \end{equation*}
    be the random Hellinger integral between $\proba'_N(\cdot|J)$ and $\tilde\proba_N(\cdot|J)$. Because the $i_0$-th marginal of $\tilde\proba_N$ is, for all $N\geq1$, equal to $\mu_{i_0}$, proving the appropriate Total Variation bounds between these two measures establishes the result. As discussed in Section \ref{sec:general-strategy}, to prove this, it is enough to bound $\E|I_{i_0}-1|$. But notice that, by Cauchy-Schwarz, we can see that $I_{i_0} \leq 1$. And also, $\{I_{i_0}=0\}$ has null probability. Furthermore, for all $x\in(0,1]$, we have that $|\log(x)| \geq |x-1|$. Then, we have
    \begin{equation*}
        \E|I_{i_0}-1| \leq  \E |\log I_{i_0}|.
    \end{equation*}
    Thus, it is enough to bound $\E |\log I_{i_0}|$ instead.
    
    For a cleaner notation, in the rest of the proof, the expectation $\thermal{\cdot}_c$ will also include expectation with respect to the independent standard Gaussian variables $\xi_1,\xi_2,...$. Taking this into account, notice that
    \begin{equation}\label{eq:log-hell-int}
        \begin{split}
            \log I_{i_0} & = \log \thermal{\exp\left\{\frac{1}{2}(\dH(\sigma_{i_0},\Theta_N) + \dH(\sigma_{i_0},\theta))\right\}}_c \\
            & \hspace{2cm}- \frac{1}{2}\log\thermal{\exp \{\dH(\sigma_{i_0},\Theta_N)\}}_c - \frac{1}{2}\log\thermal{\exp \{\dH(\sigma_{i_0},\theta)\}}_c.
        \end{split}
    \end{equation}
    For each $K\geq1$, let 
    \begin{equation*}
        \bar{X}_{N,K} := \frac{1}{K} \sum_{l=1}^K \exp\left\{\frac{1}{2}(\dH(\sigma^{(l)}_{i_0},\Theta^{(l)}_N) + \dH(\sigma^{(l)}_{i_0},\theta_l))\right\},
    \end{equation*}
    \begin{equation*}
        \bar{Y}_{N,K} := \frac{1}{K} \sum_{l=1}^K \exp\left\{\dH(\sigma^{(l)}_{i_0},\Theta^{(l)}_N)\right\}, \quad \text{and} \quad \bar{Z}_K := \frac{1}{K} \sum_{l=1}^K \exp\left\{\dH(\sigma^{(l)}_{i_0},\theta_l)\right\}.
    \end{equation*} 
    Define the following quantities
    \begin{equation*}
        \begin{dcases}
            \Delta_1 := \log \thermal{\exp\left\{\frac{1}{2}(\dH(\sigma_{i_0},\Theta_N) + \dH(\sigma_{i_0},\theta))\right\}}_c - \log \bar{X}_{N,K} \\
            \Delta_2 := \frac{1}{2}\log \bar{Y}_{N,K} - \frac{1}{2}\log\thermal{\exp \{\dH(\sigma_{i_0},\Theta_N)\}}_c \\
            \Delta_3 := \log \bar{X}_{N,K} - \log \bar{Z}_K \\
            \Delta_4 := \frac{1}{2} \log \bar{Z}_K - \frac{1}{2}\log \bar{Y}_{N,K} \\
            \Delta_5 := \frac{1}{2} \log \bar{Z}_K - \frac{1}{2}\log\thermal{\exp \{\dH(\sigma_{i_0},\theta)\}}_c.
        \end{dcases}
    \end{equation*}
    Clearly, $\log I_{i_0} = \sum_{k=1}^5 \Delta_k$. Then,
    \begin{equation}\label{eq:bound-logI}
        \E |\log I_{i_0}| \leq \sum_{k=1}^5 \E\thermal{|\Delta_k|}_c.
    \end{equation}
    We will now proceed to bound $\E\thermal{|\Delta_1|}_c,\dots,\E\thermal{|\Delta_5|}_c$ separately.
    
    By lemmas \ref{lem:log-bound} and \ref{lem:quad-conv-mom}, there are constants $C, C' > 0$ such that for all $N\geq 1$,
    \begin{equation*}
        \E\thermal{|\Delta_1|}_c \leq \frac{C}{\sqrt{K}} \quad \text{and} \quad \E\thermal{|\Delta_2|}_c \leq \frac{C'}{\sqrt{K}}.
    \end{equation*}
    Now, notice that, as functions of $\Theta_N^{(1)},\dots,\Theta_N^{(K)}$, $\log \bar{X}_{N,K}$ is $1$-Lipschitz. We then have that, for some $C'' >0$,
    \begin{equation*}
        \E\thermal{|\Delta_3|}_c \leq \sum_{l=1}^K \E\thermal{|\Theta_N^{(l)}-\theta_l|}_c\leq C'' \sqrt{\frac{K^3}{N}},
    \end{equation*}
    where we used Lemma \ref{lem:L2-conv-thetas} for the last inequality. By the same token we have that, for all $K\geq1$,
    \begin{equation*}
        \E\thermal{|\Delta_4|}_c \leq C'' \sqrt{\frac{K^3}{N}}.
    \end{equation*}
    Finally, Lemma \ref{lem:L2-lln} proves that, for some $C''' > 0$
    \begin{equation*}
        \E\thermal{|\Delta_5|}_c \leq \frac{C'''}{\sqrt{K}}.
    \end{equation*}
    If we set $K = N^{1/4}$, from equation \eqref{eq:bound-logI} and the bounds for $\E\thermal{|\Delta_1|}_c,\dots,\E\thermal{|\Delta_5|}_c$ we arrive at the conclusion of Theorem \ref{thm:main_sk}.

\end{proof}

Finally, Theorem \ref{thm:main_sk} then follows by combining Propositions \ref{prop:rs} and \ref{prop:almost-main}.
    
\paragraph{Acknowledgements.} M.S. would like to thank Pragya Sur for suggesting the cavity-contiguity connection and for the extensive discussions that followed. All possible errors are exclusive responsibility of the authors.

\appendix


\section{Properties of the Cholesky decomposition}\label{app:cholesky}

For each $K\geq 1$, let $Q\in \R^{K\times K}$ be a symmetric positive definite matrix and let $L(Q)$ be its Cholesky factor. That is, $L(Q)\in\R^{K\times K}$ is the unique lower triangular matrix with strictly positive diagonal entries such that
\begin{equation*}
    Q = L(Q) L^\top(Q).
\end{equation*}
The uniqueness property of such a factorisation is proved in \cite[Theorem 4.2.7]{golub2013matrix}. In this brief appendix, we will prove two relevant properties of Cholesky matrices that are used in Section \ref{sec:coupling} to define a coupling of the probability measures $\proba_N(\cdot)$ for different values of $N\geq1$.

The first of these properties is a consistency relation among Cholesky matrices of leading principal sub-matrices.

\begin{lemma}\label{lem:cholesky_consistency}
    Fix two integers $K' > K$, let $Q\in\R^{K'\times K'}$, and let $Q_K$ be its $K$-th leading principal sub-matrix. Then, $L(Q_K)$ is the $K$-th leading principal sub-matrix of $L(Q)$.
\end{lemma}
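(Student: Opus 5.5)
The plan is a block-matrix argument that uses only the uniqueness of the Cholesky factorisation. The statement only makes sense when $Q$ is symmetric positive definite, as in the standing assumption of this appendix, so I will assume that throughout. Write $Q$ and its factor $L := L(Q)$ in block form with respect to the splitting $K' = K + (K'-K)$:
\begin{equation*}
    Q = \begin{pmatrix} Q_K & B^\top \\ B & D \end{pmatrix}, \qquad L = \begin{pmatrix} L_{11} & 0 \\ L_{21} & L_{22} \end{pmatrix}.
\end{equation*}
The upper-right block of $L$ is zero because $L$ is lower triangular. For the same reason $L_{11}\in\R^{K\times K}$ and $L_{22}$ are themselves lower triangular. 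Their diagonal entries are diagonal entries of $L$, so they are strictly positive.

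Next, multiply out $LL^\top$ and read off the upper-left block. This gives
\begin{equation*}
    Q_K = L_{11}L_{11}^\top + 0\cdot 0^\top = L_{11}L_{11}^\top.
\end{equation*}
So $L_{11}$ is a lower triangular matrix with positive diagonal that factorises $Q_K$.

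Two facts are needed to invoke uniqueness. First, $Q_K$ is symmetric positive definite: for $x\in\R^K\setminus\{0\}$, padding with zeros to $\tilde x=(x,0)\in\R^{K'}$ gives $x^\top Q_K x = \tilde x^\top Q \tilde x > 0$. So $L(Q_K)$ is well defined, and by the uniqueness theorem cited from \cite[Theorem 4.2.7]{golub2013matrix} it is the only lower triangular factor of $Q_K$ with positive diagonal. Second, $L_{11}$ has exactly these properties. Hence $L(Q_K) = L_{11}$, which is the $K$-th leading principal sub-matrix of $L(Q)$.

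There is no real obstacle here. The only points needing care are checking that $Q_K$ inherits positive definiteness and that $L_{11}$ inherits triangularity and positive diagonal, since those are what make the uniqueness theorem apply. In the application of Section \ref{sec:coupling}, the overlap matrices $\bar Q_{N,K'}$ are only positive semidefinite in general. There I would either restrict to the almost-sure event on which they are positive definite, or note that the same block computation still gives consistency for the algorithmic Cholesky factor, since its first $K$ rows depend only on the entries of $Q_K$.
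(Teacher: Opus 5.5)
Your proof is correct, but it takes a different route from the paper. The paper gives no argument of its own. It defers to the Cholesky-update result in \cite[Appendix B.1]{osborne2010bayesian}, where the factor of an enlarged matrix is obtained by keeping the factor of the original block and computing only the new rows. Consistency is then read off from that explicit construction, which specialised here also yields the remaining blocks: $L_{21}=BL_{11}^{-\top}$, and $L_{22}$ is the Cholesky factor of the Schur complement $D-BQ_K^{-1}B^\top$.

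You instead extract the single identity $Q_K=L_{11}L_{11}^\top$ and invoke uniqueness, which the appendix already cites from \cite[Theorem 4.2.7]{golub2013matrix}. This is shorter and self-contained. It also makes explicit a hypothesis the paper leaves implicit: $Q_K$ inherits positive definiteness from $Q$, so $L(Q_K)$ is defined at all. The update-formula route gives, in exchange, constructive information about the other blocks, which this lemma does not need.

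One correction to your closing remark about the application: positive definiteness of $\bar Q_{N,K}$ is not an almost-sure event. The replicas are drawn from a Gibbs measure on the finite set $\{-1,1\}^{N-1}$, so the event $\tsigma^{(1)}=\tsigma^{(2)}$ has positive probability. On that event the Gram matrix is singular.

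Only your second option is therefore viable. It must rest on the recursion of a fixed algorithmic convention for semidefinite matrices, not on the uniqueness theorem. That theorem does not cover singular matrices: no factor with strictly positive diagonal exists, and factors with nonnegative diagonal need not be unique.
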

\begin{proof}
    This follows from \cite[Appendix B.1]{osborne2010bayesian}. Indeed, specialising the result presented there, we see that the Cholesky factor $L(Q)$ has $K$-th leading principal sub-matrix equal to $L(Q_K)$.
\end{proof}

Here we also prove a continuity property of the Cholesky decompositions of symmetric positive definite matrices. For this, consider the Frobenius norm of a matrix $A\in\R^{K\times K}$ given by
\begin{equation*}
    \|A\|_F := \sqrt{\sum_{i,j=1}^K A_{ij}^2}
\end{equation*}
The following lemma proves that, as a map on the set of symmetric and positive definite matrices of bounded spectrum, $L(\cdot)$ is Lipschitz with respect to the Frobenius norm.

\begin{lemma}\label{lem:cholesky_continuity}
    Assume there are $0 < a < b$ such that, for all $K\geq1$, the matrices $Q,Q'\in\R^{K\times K}$ are symmetric positive definite with eigenvalues in $[a,b]$. Then, 
    \begin{equation*}
        \|L(Q)-L(Q')\|_F \leq \frac{\sqrt{b}}{a}\|Q-Q'\|_F.
    \end{equation*}
\end{lemma}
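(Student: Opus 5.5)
The plan is to prove the Lipschitz bound through the differential of the map $Q\mapsto L(Q)$ along the straight segment from $Q'$ to $Q$. For $t\in[0,1]$ set $Q_t := Q' + t(Q-Q')$. The set of symmetric matrices with spectrum in $[a,b]$ is convex, since both $\lambda_{\min}$ and $\lambda_{\max}$ are controlled under convex combinations via the Rayleigh quotient. Hence every $Q_t$ is positive definite with eigenvalues in $[a,b]$, and $L_t := L(Q_t)$ is well defined. The Cholesky map is smooth (indeed real-analytic) on the open cone of positive definite matrices: its entries are obtained from the entries of $Q$ by the standard recursive formulas, which involve only rational operations and square roots of strictly positive pivots. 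It therefore suffices to prove
\begin{equation*}
\|\dot L_t\|_F \leq \frac{\sqrt{b}}{a}\|Q-Q'\|_F \qquad \text{for all } t\in[0,1],
\end{equation*}
and then integrate, using $\|L(Q)-L(Q')\|_F\le\int_0^1\|\dot L_t\|_F\,dt$.

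To bound the derivative, differentiate $Q_t=L_tL_t^\top$ and write $E:=Q-Q'$, which gives $E=\dot L_t L_t^\top + L_t\dot L_t^\top$. Multiplying on the left by $L_t^{-1}$ and on the right by $L_t^{-\top}$ yields
\begin{equation*}
L_t^{-1}\dot L_t + (L_t^{-1}\dot L_t)^\top = L_t^{-1}EL_t^{-\top}=:M.
\end{equation*}
The matrix $X:=L_t^{-1}\dot L_t$ is lower triangular, so it is recovered from the symmetric matrix $M$ by the operator $\Phi$ that keeps the strictly lower part of $M$ and halves its diagonal. Thus $\dot L_t = L_t\,\Phi(L_t^{-1}EL_t^{-\top})$. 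Since the entries kept by $\Phi$ form a subset of those of $M$ (the diagonal ones even halved), we have $\|\Phi(M)\|_F\le\|M\|_F$.

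Submultiplicativity $\|AB\|_F\le\|A\|_{op}\|B\|_F$ then gives
\begin{equation*}
\|\dot L_t\|_F\le\|L_t\|_{op}\,\|L_t^{-1}\|_{op}^2\,\|E\|_F .
\end{equation*}
From $L_tL_t^\top=Q_t$ we get $\|L_t\|_{op}=\|Q_t\|_{op}^{1/2}\le\sqrt b$ and $\|L_t^{-1}\|_{op}^2=\|Q_t^{-1}\|_{op}\le 1/a$, which is exactly the constant $\sqrt b/a$.

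The only delicate points are the justification of differentiability of $t\mapsto L_t$ and the uniqueness used to identify $X$. For the former, the recursive formulas are smooth as long as the pivots stay positive, and this holds because every $Q_t$ is positive definite. For the latter, a lower triangular $X$ with $X+X^\top=M$ is determined uniquely, so $X=\Phi(M)$. I expect no real obstacle beyond checking these points and the convexity of the spectral constraint. A sharper constant via the better bound $\|\Phi(M)\|_F\le\tfrac1{\sqrt2}\|M\|_F$ for symmetric $M$ is possible but not needed.
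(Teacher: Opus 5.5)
Your proof is correct and follows the same route as the paper: interpolate along $Q_t=Q'+t(Q-Q')$, use convexity of the spectral constraint, bound $\|\dot L_t\|_F\le\frac{\sqrt b}{a}\|Q-Q'\|_F$, and integrate. The only difference is that you derive the derivative bound yourself from $\dot L_t=L_t\,\Phi(L_t^{-1}\dot Q_tL_t^{-\top})$ instead of citing Sun's perturbation theorem. This also makes the constant transparent as $\|L_t\|_{\rm op}\|Q_t^{-1}\|_{\rm op}\le\sqrt b/a$; the paper's separate bounds $\kappa_2(Q_t)\le b/a$ and $\|Q_t\|_{\rm op}\ge a$, multiplied literally, would only give $b^{3/2}/a^2$.
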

\begin{proof}
    For all $t\in[0,1]$, define $Q_t := Q' + t(Q-Q')$. Since the set of symmetric matrices with spectrum contained in $[a,b]$ is convex, all matrices $Q_t$ have spectra contained in $[a,b]$. Write $Q_t = L_tL_t^\top$, with $L_t:=L(Q_t)$. By \cite[Theorem 1.3]{sun1991perturbation},
    \begin{equation*}
        \|\dot L_t\|_F \leq \kappa_2(Q_t)\frac{\|L_t\|_{\rm op}}{\|Q_t\|_{\rm op}}\|\dot Q_t\|_F,
    \end{equation*}
    where $\kappa_2(A):=\|A\|_{\rm op}\|A^{-1}\|_{\rm op}$ denotes the condition number of $A$. Since the spectrum of $Q_t$ is contained in $[a,b]$, we have $\kappa_2(Q_t)\leq b/a$, $\|L_t\|_{\rm op}=\sqrt{\|Q_t\|_{\rm op}}\leq \sqrt b$, and $\|Q_t\|_{\rm op}\geq a$. We then have
    \begin{equation*}
        \|\dot L_t\|_F \leq \frac{\sqrt b}{a}\|Q-Q'\|_F;
    \end{equation*}
    which implies
    \begin{equation*}
        \|L(Q)-L(Q')\|_F = \left\|\int_0^1 \dot L_t\,dt\right\|_F \leq \int_0^1 \|\dot L_t\|_F\,dt \leq \frac{\sqrt b}{a}\|Q-Q'\|_F.
    \end{equation*}
\end{proof}


\section{Proof of Proposition \ref{prop:rs}}\label{app:proof_applications}

We first recall a functional inequality established in \cite{bauerschmidt2019very} for the high-temperature regime of Ising models with positive semidefinite interaction matrices. For convenience, we formulate it directly for the two-replica system. For this, let $\Sigma := (\sigma^{(1)},\sigma^{(2)})\in\{-1,1\}^{2N}$ and consider the two-replica Gibbs measure on $\{-1,1\}^{2N}$ defined by
\begin{equation*}
    \proba^{(2)}_N(\Sigma|B,h) = \frac{1}{Z_N^2}\exp\left\{\Sigma^\top B\, \Sigma + l^\top \Sigma\right\}.
\end{equation*}
Here 
$B\in\R^{2n\times 2n}$ is the symmetric matrix given by
\begin{equation*}
    B = \gamma\begin{pmatrix}
        J & 0 \\
        0 & J
    \end{pmatrix}
\end{equation*}
and $l = (h, h)\in\mathbb R^{2N}$. As usual, we will denote expectation with respect to this measure by $\thermal{\cdot}$. In \cite{bauerschmidt2019very}, they prove that if $\|B\|_{\mathrm{op}}<2$, a log-Sobolev inequality holds. From this inequality, following standard arguments, a Poincaré-type inequality then follows. In particular, we have that if $\|B\|_{\mathrm{op}}<1/2$, there is some $C > 0$, such that for every test function $\phi:\{-1,1\}^{2N}\to\mathbb R$, we have
\begin{equation*}
    \thermal{(\phi(\Sigma)-\thermal{\phi(\Sigma)})^2} \leq \frac{C}{1-2\|B\|_{\mathrm{op}}}\sum_{i=1}^{2N}\thermal{\left|\partial_i \phi(\Sigma)\right|^2},
\end{equation*}
where $\partial_i \phi(\Sigma)$ is the discrete gradient on the hypercube. Furthermore, notice that, by definition $\|B\|_{\rm op} = \gamma\|J\|_{\rm op}$ and, by usual spectral concentration of Wigner matrices (see, for example \cite[Corollary 5.35]{vershynin2012introduction}), for every $\epsilon > 0$, we eventually have $\|J\|_{\rm op} < 2 + \epsilon$. We then have that, if $\gamma <1/4$, there is some $C >0$ such that
\begin{equation}\label{eq:poincare_sk}
    \thermal{(\phi(\Sigma)-\thermal{\phi(\Sigma)})^2} \leq C\sum_{i=1}^{2N}\thermal{\left|\partial_i \phi(\Sigma)\right|^2}.
\end{equation}

As we will see now, we can use this to prove the concentration of the overlaps. Indeed, if we let
\begin{equation*}
    R(\Sigma) := \sigma^{(1)\top}\sigma^{(2)},
\end{equation*}
then we have $|\partial_i R| \leq 2$. From \eqref{eq:poincare_sk} we conclude that, if $\gamma < 1/4$, there is a constant $C > 0$ such that
\begin{equation}\label{eq:thermal_conc_sk}
    \thermal{(Q_{12} - \thermal{Q_{12}})^2} \leq \frac{C}{N}.
\end{equation}
As stated in the following lemma, from this Poincaré-type inequality, we also get a bound for the variance of the overlaps.

\begin{lemma}\label{lem:overlap-concentration}
    If $\gamma < 1/4$, there exists a constant $C>0$ such that, for all $N\geq1$ large enough,
    \begin{equation*}
        \operatorname{Var}\left(\thermal{Q_{12}}\right)\leq \frac{C}{N}.
    \end{equation*}
\end{lemma}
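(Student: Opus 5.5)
The plan is to bound the disorder fluctuations of $\thermal{Q_{12}}$ by a Gaussian Poincaré inequality in the couplings $z=(z_{ij})_{i\le j}$. This gives
\begin{equation*}
    \operatorname{Var}\left(\thermal{Q_{12}}\right) \leq \E\sum_{i\le j}\left|\partial_{z_{ij}}\thermal{Q_{12}}\right|^2 .
\end{equation*}
Since $\thermal{Q_{12}}=\thermal{\sigma^{(1)}}^\top\thermal{\sigma^{(2)}}/N$, differentiating the two-replica Gibbs measure (the Hamiltonian is linear in $J$) gives a covariance under the Gibbs measure:
\begin{equation*}
    \partial_{z_{ij}}\thermal{Q_{12}} = \frac{c_{ij}\,\gamma}{\sqrt N}\,\mathrm{Cov}\left(Q_{12},\, \sigma^{(1)}_i\sigma^{(1)}_j+\sigma^{(2)}_i\sigma^{(2)}_j\right),
\end{equation*}
where $\mathrm{Cov}$ is taken under $\thermal{\cdot}$ and $c_{ij}$ is a bounded combinatorial factor coming from the symmetric double sum.

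Summing the squares over $i\le j$ gives
\begin{equation*}
    \sum_{i\le j}|\partial_{z_{ij}}\thermal{Q_{12}}|^2 \lesssim \frac{1}{N}\sum_{i,j}\thermal{\dot Q\, \dot S_{ij}}^2,
\end{equation*}
with $\dot Q=Q_{12}-\thermal{Q_{12}}$ and $S_{ij}=\sigma^{(1)}_i\sigma^{(1)}_j+\sigma^{(2)}_i\sigma^{(2)}_j$. The main step is to bound $\sum_{i,j}\thermal{\dot Q\,\dot S_{ij}}^2$ by a constant independent of $N$. I would use duality. For a matrix $A$ with $\|A\|_F\le1$, we have
\begin{equation*}
    \sum_{ij}A_{ij}\thermal{\dot Q\,\dot S_{ij}}=\thermal{\dot Q\,(\phi_A-\thermal{\phi_A})},
\end{equation*}
where $\phi_A(\Sigma)=\sum_{\ell=1,2}\sigma^{(\ell)\top}A\sigma^{(\ell)}$. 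By Cauchy–Schwarz this is at most $\thermal{\dot Q^2}^{1/2}\,\Var{\phi_A}^{1/2}$, with the variance taken under the Gibbs measure. The first factor is $O(N^{-1/2})$ by \eqref{eq:thermal_conc_sk}. For the second factor I apply \eqref{eq:poincare_sk} to $\phi_A$. Its discrete gradient in coordinate $i$ of replica $\ell$ is of order $|(A\sigma^{(\ell)})_i|$ plus diagonal terms. Summing the squares gives $\lesssim \|A\sigma^{(\ell)}\|^2 \le \|A\|_{\rm op}^2 N \le N$. Hence $\Var{\phi_A}\lesssim N$, so the supremum over $A$ is $O(1)$ and $\sum_{ij}\thermal{\dot Q\dot S_{ij}}^2=O(1)$.

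Inserting this bound gives $\operatorname{Var}(\thermal{Q_{12}})\le C/N$. One caveat: \eqref{eq:poincare_sk} holds only on the event $\|J\|_{\rm op}<2+\epsilon$, which is why the statement says "for $N$ large enough". I would handle this by writing the Gaussian Poincaré bound with the gradient expectation split over that event and its complement. On the complement, the derivative bound has to be controlled crudely: $\sum|\partial \thermal{Q_{12}}|^2\lesssim N$ trivially, because each covariance is bounded and there are $N^2$ terms weighted by $1/N$. This is multiplied by $\proba(\|J\|_{\rm op}\ge 2+\epsilon)\le e^{-cN}$ from Wigner spectral concentration. The result is an exponentially small contribution, which closes the argument.

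The step I expect to be the main obstacle is the smoothing of this cutoff. The Gaussian Poincaré inequality requires a globally differentiable function, and $\thermal{Q_{12}}$ is smooth, so the inequality itself is fine. The point that needs checking is that the pointwise derivative bounds on the bad event remain polynomial in $N$, which the trivial bound above provides.
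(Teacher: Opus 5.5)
Your proposal is correct and follows essentially the same route as the paper's proof: Gaussian Poincaré in the disorder, the derivative of $\thermal{Q_{12}}$ as a Gibbs covariance with $\sigma_i^{(\ell)}\sigma_j^{(\ell)}$, Frobenius duality against unit-norm test matrices, Cauchy--Schwarz with \eqref{eq:thermal_conc_sk}, and \eqref{eq:poincare_sk} applied to the quadratic form $\phi_A$ to get $\operatorname{Var}(\phi_A)\lesssim N$. Two points in your version are handled more carefully than in the paper's text: you split the gradient expectation over $\{\|J\|_{\rm op}<2+\epsilon\}$ and its exponentially small complement, where the paper applies \eqref{eq:poincare_sk} unconditionally, and you keep the diagonal terms in the discrete gradient of $\phi_A$, which the paper's pointwise bound drops.
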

\begin{proof}

    The proof of the lemma follows the strategy of \cite[Proposition 3.1.16]{talagrand2010mean} but replacing convex concentration inequalities with \eqref{eq:poincare_sk}. First, notice that $\operatorname{Var}(\thermal{Q_{12}}) = n^{-2} \operatorname{Var}(\thermal{R(\Sigma)})$ and consider $\thermal{R(\Sigma)}$ as a function of the Gaussian disorder. For simplicity we write the off-diagonal disorder as $J_{ij}=z_{ij}/\sqrt n$, with $(z_{ij})_{i<j}$ independent standard Gaussian variables. By the Gauss-Poincaré inequality \cite{boucheron2003concentration},
    \begin{equation}\label{eq:gauss_poin}
        \operatorname{Var}\left(\thermal{R(\Sigma)}\right)\leq \sum_{1\leq i<j\leq N}\E\left[\left(\frac{\partial}{\partial z_{ij}}\thermal{R(\Sigma)}\right)^2\right] = \E\left\|\nabla_z\thermal{R(\Sigma)}\right\|^2 .
    \end{equation}
    Differentiating the Gibbs expectation gives
    \begin{equation*}
        \frac{\partial}{\partial z_{ij}}\thermal{R(\Sigma)}=\frac{\gamma}{\sqrt n}\thermal{\left(R(\Sigma)-\thermal{R(\Sigma)}\right)\left(U_{ij}(\Sigma)-\thermal{U_{ij}(\Sigma)}\right)},
    \end{equation*}
    where $U:\{-1,1\}^{2N}\mapsto\R^{N \times N}$ is defined according to
    \begin{equation*}
        U_{ij}(\Sigma) := \sigma_i^{(1)}\sigma_j^{(1)}+\sigma_i^{(2)}\sigma_j^{(2)}.
    \end{equation*}
    Therefore,
    \begin{equation*}
        \operatorname{Var}\left(\thermal{R(\Sigma)}\right)\leq \frac{\beta^2}{n}\E\left\|\thermal{\left(R(\Sigma)-\thermal{R(\Sigma)}\right)\left(U(\Sigma)-\thermal{U(\Sigma)}\right)}\right\|_F^2
    \end{equation*}
    with $\|\cdot\|_F$ the Frobenius norm.
    
    Fix an arbitrary matrix $M\in\R^{N \times N}$ with $\|M\|_{\mathrm F}=1$ and let
    \begin{equation*}
        \phi_M(\Sigma) := \sigma^{(1)\top} M \sigma^{(1)} + \sigma^{(2)\top} M \sigma^{(2)}.
    \end{equation*}
    Notice that
    \begin{equation*}
        \thermal{(R(\Sigma)-\thermal{R(\Sigma)})(\phi_M(\Sigma)-\thermal{\phi_M(\Sigma)})} = \sum_{i,j=1}^N M_{ij}\thermal{(R(\Sigma)-\thermal{R(\Sigma)})(U_{ij}(\Sigma)-\thermal{U_{ij}(\Sigma)})}.
    \end{equation*}
    Therefore, by duality between the Frobenius norm and the Frobenius inner product,
    \begin{equation*}
        \sup_{\|M\|_{\mathrm F}=1}\thermal{(R(\Sigma)-\thermal{R(\Sigma)})(\phi_M(\Sigma)-\thermal{\phi_M(\Sigma)})}^2 = \left\|\thermal{\left(R(\Sigma)-\thermal{R(\Sigma)}\right)\left(U(\Sigma)-\thermal{U(\Sigma)}\right)}\right\|^2_{\mathrm F}.
    \end{equation*}

    We now estimate the variance of $\phi_M$. For every $1\leq k\leq N$, flipping the coordinate $\sigma_k^{(1)}$ changes $\phi_M$ by at most
    \begin{equation*}
        |\partial_k \phi_M(\Sigma)| \leq 4\left(|(M\sigma^{(1)})_k| + |(M^\top\sigma^{(1)})_k|\right).
    \end{equation*}
    Analogously, for $n+1\leq k\leq 2n$,
    \begin{equation*}
        |\partial_k \phi_M(\Sigma)| \leq 4\left(|(M\sigma^{(2)})_{k-n}| + |(M^\top\sigma^{(2)})_{k-n}|\right).
    \end{equation*}
    Using $(a+b)^2\leq 2a^2+2b^2$, we then obtain
    \begin{equation*}
        \sum_{k=1}^{2N}\thermal{|\partial_k \phi_M(\Sigma)|^2} \leq 32\thermal{\|M\sigma^{(1)}\|^2 + \|M^\top\sigma^{(1)}\|^2 + \|M\sigma^{(2)}\|^2 + \|M^\top\sigma^{(2)}\|^2}.
    \end{equation*}
    Since $\sigma^{(1)},\sigma^{(2)}\in\{-1,1\}^N$ and $\|M\|_{\mathrm{op}}\leq \|M\|_{\mathrm F}=1$, we have
    \begin{equation*}
        \|M\sigma^{(a)}\|^2 \leq \|M\|_{\mathrm{op}}^2\|\sigma^{(a)}\|^2 \leq N
    \end{equation*}
    for $a\in\{1,2\}$, and similarly for $M^\top$. Therefore,
    \begin{equation*}
        \sum_{k=1}^{2N}\thermal{|\partial_k \phi_M(\Sigma)|^2} \leq K n
    \end{equation*}
    for some constant $K>0$. By \eqref{eq:poincare_sk}, it follows that
    \begin{equation*}
        \thermal{(\phi_M(\Sigma)-\thermal{\phi_M(\Sigma)})^2} \leq K n.
    \end{equation*}

    Combining this with \eqref{eq:thermal_conc_sk} and using Cauchy-Schwarz,
    \begin{equation*}
        \thermal{(R(\Sigma)-\thermal{R(\Sigma)})(\phi_M(\Sigma)-\thermal{\phi_M(\Sigma)})}^2 \leq \thermal{(R(\Sigma)-\thermal{R(\Sigma)})^2}\thermal{(\phi_M(\Sigma)-\thermal{\phi_M(\Sigma)})^2} \leq K' n^2,
    \end{equation*}
    for some fixed $K' >0$. Since this bound holds uniformly over all matrices $M$ such that $\|M\|_{\mathrm F}=1$, we conclude that
    \begin{equation*}
        \left\|\thermal{\left(R(\Sigma)-\thermal{R(\Sigma)}\right)\left(U(\Sigma)-\thermal{U(\Sigma)}\right)}\right\|_{\mathrm F}^2 \leq K' n^2.
    \end{equation*}
    We then have
    \begin{equation*}
        \operatorname{Var}\left(\thermal{Q_{12}}\right) = \frac{1}{n^2} \operatorname{Var}\left(\thermal{R(\Sigma)}\right) \leq \frac{\beta^2}{n^3}\E\left\|\thermal{\left(R(\Sigma)-\thermal{R(\Sigma)}\right)\left(U(\Sigma)-\thermal{U(\Sigma)}\right)}\right\|_{\mathrm F}^2 \leq \frac{\beta^2K'}{n}.
    \end{equation*}

\end{proof}

For completeness, here we reproduce the proof from \cite{talagrand2010mean} for the uniqueness of the solutions of the fixed point equation for the SK model in high-temperature.

\begin{lemma}\label{lem:unique-sk-fixed-point}
    Assume that $\gamma < 1/2$. Then the fixed point equation
    \begin{equation*}
        q = \E\left[\tanh^2\left(2\beta\sqrt q\,Z+h\right)\right]
    \end{equation*}
    has a unique solution in $[0,1]$.
\end{lemma}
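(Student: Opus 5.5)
The plan is to view the right-hand side as a map $\Psi:[0,1]\to[0,1]$,
\begin{equation*}
    \Psi(q) := \E\left[\tanh^2\left(2\beta\sqrt q\,Z+h\right)\right],
\end{equation*}
get existence of a fixed point from continuity, and get uniqueness by showing that $\Psi$ is a strict contraction when the inverse temperature is small. I read the hypothesis $\gamma<1/2$ as $\beta<1/2$, since $\gamma$ plays the role of $\beta$ in this appendix.

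\emph{Existence.} Since $0\le\tanh^2\le 1$, the map $\Psi$ takes values in $[0,1]$. By dominated convergence, $\Psi$ is continuous on $[0,1]$, including at $q=0$. The function $g(q):=\Psi(q)-q$ then satisfies $g(0)=\tanh^2 h\ge 0$ and $g(1)=\Psi(1)-1\le 0$. The intermediate value theorem gives a solution in $[0,1]$.

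\emph{Derivative bound.} Set $a:=2\beta$ and $f:=\tanh^2$. For $q\in(0,1)$, differentiating under the expectation (justified because $f'$ is bounded and $Z$ is integrable) gives
\begin{equation*}
    \Psi'(q)=\frac{a}{2\sqrt q}\,\E\left[Z f'(a\sqrt q Z+h)\right].
\end{equation*}
Gaussian integration by parts, $\E[Z\varphi(Z)]=\E[\varphi'(Z)]$, turns this into
\begin{equation*}
    \Psi'(q)=\frac{a^2}{2}\,\E\left[f''(a\sqrt qZ+h)\right].
\end{equation*}
This removes the apparent singularity at $q=0$. A direct computation with $t=\tanh x$ gives $f'=2t(1-t^2)$ and $f''=2(1-t^2)(1-3t^2)$. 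For $t^2\in[0,1]$ this gives $|f''|\le 2$. Hence $|\Psi'(q)|\le a^2=4\beta^2$ for all $q\in(0,1)$.

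\emph{Uniqueness.} Suppose $q_1<q_2$ are two fixed points in $[0,1]$. Since $\Psi$ is continuous on $[q_1,q_2]$ and differentiable on the interior, the mean value theorem gives
\begin{equation*}
    q_2-q_1=\Psi(q_2)-\Psi(q_1)\le 4\beta^2(q_2-q_1).
\end{equation*}
This contradicts $4\beta^2<1$. So the fixed point is unique.

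The only delicate point is the derivative bound: the factor $1/\sqrt q$ must be removed by Gaussian integration by parts, and $f''$ must be bounded uniformly by $2$. With that in hand, the rest is the one-variable mean value argument. The condition $\beta<1/2$ is more than enough here, and it is in particular implied by the paper's regime $\beta<1/4$.
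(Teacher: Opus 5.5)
Your proposal is correct and is essentially the paper's proof: existence by continuity, Gaussian integration by parts giving $\Psi'(q)=2\beta^2\,\E[f''(2\beta\sqrt q\,Z+h)]$, a uniform bound on $f''=\tanh^2{}''$, and the conclusion that the map is a contraction when $\beta<1/2$. One small point in your favour: you prove the two-sided bound $|f''|\le 2$, whereas the paper records only $f''\le 2$. That one-sided bound already gives uniqueness, since $q\mapsto F(q)-q$ is then strictly decreasing, but it does not by itself make $F$ a contraction.
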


\begin{proof}

    Let $F:[0,1]\mapsto[0,1]$ be defined by
    \begin{equation*}
        F(q) = \E\left[\tanh^2\left(2\beta\sqrt q\,Z+h\right)\right].
    \end{equation*}
    Existence of solutions follows from continuity, since $F$ maps $[0,1]$ into itself. So we only need to prove uniqueness. Let $f(x) = \tanh^2(x)$. For $q>0$, using the Gaussian integration by parts formula,
    \begin{equation*}
        \frac{d}{dq}\E\left[f\left(2\beta\sqrt q\,Z+h\right)\right]= 2\beta^2\E\left[f''\left(2\beta\sqrt q\,Z+h\right)\right].
    \end{equation*}
    A direct computation gives
    \begin{equation*}
        f''(x) = 2\operatorname{sech}^4(x)-4\tanh^2(x)\operatorname{sech}^2(x).
    \end{equation*}
    Since $\operatorname{sech}^2(x)\leq 1$, we have $f''(x)\leq 2$ and then
    \begin{equation*}
        F'(q)\leq 4\beta^2\E\left[\operatorname{sech}^2\left(2\beta\sqrt q\,Z+h\right)\right] \leq 4 \beta^2.
    \end{equation*}
    If $\beta < 1/2$, this implies that $F$ is a contraction and the conclusion follows by the fixed point theorem.
    
\end{proof}

\begin{lemma}\label{lem:cavity-full-overlap-mean}
    Assume that $\beta<1/4$. Let $Q_{12}=N^{-1}\sum_{i=1}^N \sigma_i^{(1)}\sigma_i^{(2)}$ and $Q_{12}^{c}=N^{-1}\sum_{i\neq i_0}\sigma_i^{(1)}\sigma_i^{(2)}$. Then there exists $C>0$ such that
    \begin{equation*}
        \left|\E\thermal{Q_{12}}_N-\E\thermal{Q_{12}^{c}}_{N,c}\right|\leq \frac{C}{\sqrt{N}}.
    \end{equation*}
    Furthermore, there exists $C'>0$ such that
    \begin{equation*}
        \left|\E\thermal{Q_{12}}_{N-1}-\E\thermal{Q_{12}^{c}}_{N,c}\right|\leq \frac{C'}{\sqrt{N}}.
    \end{equation*}
\end{lemma}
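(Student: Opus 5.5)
The plan is to reduce both estimates to thermal comparisons that are controlled by the Poincaré inequality \eqref{eq:poincare_sk} and by the moment bounds of Lemma \ref{lem:quad-conv-mom}. For the first inequality, I first write $Q_{12}=Q_{12}^c+N^{-1}\sigma^{(1)}_{i_0}\sigma^{(2)}_{i_0}$. This shows that $Q_{12}$ and $Q_{12}^c$ differ deterministically by at most $1/N$, so it suffices to compare $\E\thermal{Q^c_{12}}_N$ with $\E\thermal{Q^c_{12}}_{N,c}$.

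For that comparison I use Lemma \ref{lem:derivative} on the two-replica product measure. Since replicas are independent, $d\proba_N^{\otimes 2}/d\proba_{N,c}^{\otimes2}=\mathcal D_N^{(1)}\mathcal D_N^{(2)}$, where $\mathcal D_N^{(l)}$ is evaluated at $(\sigma^{(l)}_{i_0},\Theta^{(l)}_N)$. Because $\thermal{\mathcal D_N^{(1)}\mathcal D_N^{(2)}}_c=1$, we get
\begin{equation*}
\thermal{Q^c_{12}}_N-\thermal{Q^c_{12}}_{c}=\thermal{\big(Q^c_{12}-\thermal{Q^c_{12}}_c\big)\big(\mathcal D_N^{(1)}\mathcal D_N^{(2)}-1\big)}_c .
\end{equation*}
Applying Cauchy--Schwarz first thermally and then in $\E$ bounds the expectation of the absolute value by
\begin{equation*}
\Big(\E\thermal{(Q^c_{12}-\thermal{Q^c_{12}}_c)^2}_c\Big)^{1/2}\Big(\E\thermal{(\mathcal D_N^{(1)}\mathcal D_N^{(2)}-1)^2}_c\Big)^{1/2}.
\end{equation*}
The first factor is $O(N^{-1/2})$. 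To see this, note that the cavity system is an Ising model on $N-1$ spins with interaction matrix $\beta\tJ$, and $\|\tJ\|_{\rm op}\le\|J\|_{\rm op}$. On the event $\{\|J\|_{\rm op}\le 2+\epsilon\}$, \eqref{eq:poincare_sk} and \eqref{eq:thermal_conc_sk} therefore apply and give a thermal variance at most $C/N$. On the complement the quantity is bounded by $4$, and the complement has exponentially small probability by Gaussian concentration of $\|J\|_{\rm op}$.

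For the second factor, I bound $\E\thermal{(\mathcal D_N^{(1)})^2(\mathcal D_N^{(2)})^2}_c$ uniformly in $N$. By Jensen, $\thermal{e^{\dH}}_c^{-1}\le\thermal{e^{-\dH}}_c$. So $\mathcal D_N^{2}$ is bounded by $e^{2\dH}$ times a power of $\thermal{e^{-\dH}}_c$. Hölder's inequality then reduces everything to exponential moments $\E\thermal{\exp\{t\,\dH(\sigma_{i_0},\Theta_N)\}}_c$ with replicas. These are bounded exactly as in Lemma \ref{lem:quad-conv-mom}, by integrating out $J_{i_0}$, which is independent of the cavity measure. I expect this step to be the most delicate, since care is needed with the random normalisation, but it should close with finitely many Hölder splits.

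For the second inequality I compare the cavity system with the genuine $(N-1)$-spin SK model. The cavity Hamiltonian is exactly the SK Hamiltonian on $N-1$ spins with couplings $z_{ij}/\sqrt N=\sqrt{(N-1)/N}\,z_{ij}/\sqrt{N-1}$. It is therefore the $(N-1)$-model at inverse temperature $\beta'=\beta\sqrt{(N-1)/N}$, with $|\beta-\beta'|=O(1/N)$. Moreover $Q^c_{12}=\frac{N-1}{N}Q_{12}^{(N-1)}$, which contributes another $O(1/N)$. It remains to show that $\partial_\beta\E\thermal{Q_{12}}_{N-1}$ is bounded uniformly on $[\beta',\beta]\subset[0,1/4)$. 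Differentiating gives a thermal covariance between $Q_{12}$ and $\sum_{i,j}J_{ij}(\sigma^{(1)}_i\sigma^{(1)}_j+\sigma^{(2)}_i\sigma^{(2)}_j)$. By Cauchy--Schwarz this covariance is at most $\sqrt{C/N}$ times the square root of the thermal variance of $\phi_J(\Sigma)=\sigma^{(1)\top}J\sigma^{(1)}+\sigma^{(2)\top}J\sigma^{(2)}$. The argument used for $\phi_M$ in the proof of Lemma \ref{lem:overlap-concentration}, now with $\|J\|_{\rm op}\le 2+\epsilon$, bounds that variance by $CN$. The derivative is therefore $O(1)$ on the good event, and the bad event is again negligible. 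This gives an $O(1/N)$ bound, which is stronger than the claimed $C'/\sqrt N$; combining it with the first inequality by the triangle inequality would also control $\E\thermal{Q_{12}}_N-\E\thermal{Q_{12}}_{N-1}$.
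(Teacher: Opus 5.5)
Your proof is correct, but for the first inequality it takes a different route from the paper.

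\textbf{First inequality.} The paper interpolates between the two measures. It sets $\thermal{F}_t=\thermal{Fe^{tV_N}}_c/\thermal{e^{tV_N}}_c$ with $V_N=\dH(\sigma^{(1)})+\dH(\sigma^{(2)})$, writes $\frac{d}{dt}\thermal{Q^c_{12}}_t$ as a thermal covariance of $Q^c_{12}$ and $V_N$, and bounds it by Cauchy--Schwarz. That bound uses the Poincar\'e inequality for every tilted measure $\thermal{\cdot}_t$ together with $\E\thermal{(V_N-\thermal{V_N}_t)^2}_t\le C$ uniformly in $t$.

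You instead change measure in one step with the exact Radon--Nikodym derivative of Lemma \ref{lem:derivative}. This is a $\chi^2$-type estimate, $|\thermal{f}-\thermal{f}_c|\le \thermal{(f-\thermal{f}_c)^2}_c^{1/2}\,\thermal{(\mathcal D_N^{(1)}\mathcal D_N^{(2)}-1)^2}_c^{1/2}$. It needs the Poincar\'e inequality only at the cavity measure, which is immediate from $\|\tJ\|_{\rm op}\le\|J\|_{\rm op}$. It also needs a uniform bound on $\E\thermal{\mathcal D_N^2}_c^2$, which is elementary because $J_{i_0}$ is independent of $\proba_{N,c}$. You never have to control functional inequalities or moments of $V_N$ along the whole tilted family. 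Your version is also closer to the contiguity philosophy the paper advocates, since it uses the likelihood ratio directly rather than an interpolation path.

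The step you flag as delicate is in fact easy. Under $\proba_{N,c}$ the spin $\sigma_{i_0}$ is uniform and independent of $\tsigma$, so $\thermal{e^{\dH}}_c=\thermal{\cosh(2\beta\Theta_N+h)}_c\ge 1$. Hence $\mathcal D_N\le e^{\dH}$, and $\E\thermal{\mathcal D_N^2}_c^2\le \E\thermal{e^{2\dH}}_c^2$ is bounded by integrating out $J_{i_0}$ over two replicas. No Jensen or H\"older manipulation of the normalisation is needed.

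\textbf{Second inequality.} The paper only says this part follows ``in an analogous manner''. Your argument makes it explicit and is valid:
\begin{itemize}
\item You identify the cavity system with the $(N-1)$-spin model at inverse temperature $\beta\sqrt{(N-1)/N}$.
\item You bound $\partial_\beta\thermal{Q_{12}}$ on the good event using the Poincar\'e inequality and the $\phi_M$-type variance estimate with $M=J$.
\item You discard the bad event by the trivial bound times its exponentially small probability.
\end{itemize}
This gives $O(1/N)$, which is stronger than the stated $C'/\sqrt N$.
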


As a final ingredient, we need to show that the means of the overlaps with respect to cavity and full differ in a small amount.

\begin{proof}
    Since $\left|Q_{12}-Q_{12}^{c}\right|\leq \frac{1}{N}$, it is enough to prove that
    \begin{equation*}
        \left|\E\thermal{Q_{12}^{c}}-\E\thermal{Q_{12}^{c}}_c\right|\leq \frac{C}{\sqrt{N}}.
    \end{equation*}
    For $t\in[0,1]$, define
    \begin{equation*}
        \thermal{F}_t := \frac{\thermal{F\exp\{tV_N\}}_c}{\thermal{\exp\{tV_N\}}_c},
    \end{equation*}
    where $V_N:=\Delta H_N(\sigma^{(1)})+\Delta H_N(\sigma^{(2)})$. Then $\thermal{\cdot}_0=\thermal{\cdot}_c$ and $\thermal{\cdot}_1=\thermal{\cdot}$. Differentiating,
    \begin{equation*}
        \frac{d}{dt}\thermal{Q_{12}^{c}}_t = \thermal{Q_{12}^{c}V_N}_t-\thermal{Q_{12}^{c}}_t\thermal{V_N}_t.
    \end{equation*}
    By Cauchy-Schwarz,
    \begin{equation*}
        \left|\frac{d}{dt}\thermal{Q_{12}^{c}}_t\right| \leq \thermal{\left(Q_{12}^{c}-\thermal{Q_{12}^{c}}_t\right)^2}_t^{1/2}\thermal{\left(V_N-\thermal{V_N}_t\right)^2}_t^{1/2}.
    \end{equation*}
    The Poincaré inequality from above implies
    \begin{equation*}
        \thermal{\left(Q_{12}^{c}-\thermal{Q_{12}^{c}}_t\right)^2}_t \leq \frac{C}{N},
    \end{equation*}
    while $\E\thermal{(V_N-\thermal{V_N}_t)^2}_t\leq C$. Hence,
    \begin{equation*}
        \left|\E\frac{d}{dt}\thermal{Q_{12}^{c}}_t\right|\leq \frac{C}{\sqrt{N}}.
    \end{equation*}
    Integrating over $t\in[0,1]$ yields
    \begin{equation*}
        \left|\E\thermal{Q_{12}^{c}}-\E\thermal{Q_{12}^{c}}_c\right|\leq \frac{C}{\sqrt{N}},
    \end{equation*}
    which finishes the first part of the lemma. The second one follows in an analogous manner.
\end{proof}

Notice that this lemma along with Lemma \ref{lem:overlap-concentration} prove that, if $\E\thermal{Q_{12}}$ converges to a constant $q$, then
\begin{equation*}
    \E\thermal{(Q_{12}-q)^2} \leq \frac{C}{N}.
\end{equation*}
We are then only left to prove that $\E\thermal{Q_{12}}$ converges. The lemma also implies that, when they exist, we can interchange the limits of $\E\thermal{Q_{12}}_c$ and $\E\thermal{Q_{12}}$ whenever necessary. We will do so in the following argument.

Let $q$ be the solution of $q=F(q)$ as in the above lemma. To finish the proof of Proposition \ref{prop:rs}, it is enough to note that $\E\thermal{Q_{12}}$ is a bounded sequence. Then, for every subsequence, there is a sub-subsequence such that it converges to some constant. Now, assume that there some of these sub-subsequences such that it converges to a constant $q'$ different from $q$. Over this sub-subsequence, Proposition \ref{prop:almost-main} holds and thus we have, by a straightforward adaptation of the proof of Corollary \ref{cor:local-observables}, that
\begin{equation*}
    q' = \E\left[\tanh^2\left(2\beta\sqrt q'\,Z+h\right)\right].
\end{equation*}
But by Lemma \ref{lem:unique-sk-fixed-point}, because we are assuming that $\beta < 1/4$, there is a unique solution to this equation and it is equal to $q$. This is a contradiction. Therefore, for every subsequence, there is a sub-subsequence such that $\E\thermal{Q_{12}}$ converges to $q$. We then conclude that
\begin{equation*}
    \lim_{N\to\infty} \E\thermal{Q_{12}} = q.
\end{equation*}
This along with Lemma \ref{lem:overlap-concentration} prove Proposition \ref{prop:rs}.

\end{document}